\newenvironment{remark}[1][Remark:]{\begin{trivlist}
\item[\hskip \labelsep {\bfseries #1}]}{\end{trivlist}}
\newcommand{\bfs}[1]{\mbox{\boldmath$#1$}}
\theoremstyle{definition}
\newtheorem{definition}{Definition}[section]
\theoremstyle{remark}
\newtheorem{example}{Example}[section]
\theoremstyle{plain}
\newtheorem{theorem}{Theorem}[section]
\newtheorem{proposition}[theorem]{Proposition}
\newtheorem{corollary}[theorem]{Corollary}
\def\keywords#1{{\vskip4pt
\noindent
\hbox to59.5pt{{\it{Key}}\enspace {\it{words}}:\quad\hss}\vtop{\advance \hsize by -59.5pt
\leftskip=28pt \rightskip=0pt
\noindent\ignorespaces#1\vskip8pt}}}
\newcommand{\tr}{^{\prime}}
\def\bd#1{\mbox{\boldmath $#1$}}
\newcommand*\xbar[1]{%
  \hbox{%
    \vbox{%
      \hrule height 0.5pt 
      \kern0.5ex
      \hbox{%
        \kern-0.25em
        \ensuremath{#1}%
        \kern-0.1em
      }%
    }%
  }%
}
\newcommand{\mysetminusD}{\hbox{\tikz{\draw[line width=0.6pt,line cap=round] (3pt,0) -- (0,6pt);}}}
\newcommand{\mysetminusA}{\,\, {\mysetminusD} \,\,}
\begin{document}

\title{Hierarchical Aitchison-Silvey models for incomplete binary sample spaces}

\author{Anna Klimova \\
{\small{National Center for Tumor Diseases (NCT), Partner Site Dresden, and}}\\
{\small{Institute for  Medical Informatics and Biometry,}}\\ 
{\small{Technical University, Dresden, Germany} }\\
{\small \texttt{anna.klimova@nct-dresden.de}}\\
{}\\
\and 
Tam\'{a}s Rudas \\
{\small{Department of Statistics, E\"{o}tv\"{o}s Lor\'{a}nd University, Budapest, Hungary}}\\
{\small \texttt{trudas@elte.hu}}\\
}

 \date{}
\maketitle

\vspace{20mm}

\begin{abstract}{
Multivariate sample spaces may be incomplete Cartesian products, when certain combinations of the categories of the variables are not possible. Traditional log-linear models, which generalize independence and conditional independence, do not apply in such cases, as they may associate positive probabilities with the non-existing cells.  To describe the association structure in incomplete sample spaces, this paper develops a class of hierarchical multiplicative models  which are defined by setting certain non-homogeneous generalized odds ratios equal to one and are named after Aitchison and Silvey who were among the first to consider such ratios. These models are curved exponential families that do not contain an overall effect and, from an algebraic perspective, are non-homogeneous toric ideals.  The relationship of this model class with log-linear models and quasi log-linear models is studied in detail in terms of both  statistics and algebraic geometry. The  existence of maximum likelihood estimates and their properties, as well as the relevant algorithms are also discussed.}
\end{abstract}

\section{Introduction}\label{SectionMotivation}

In multivariate statistical analysis the joint sample space of the variables is usually the Cartesian product of the ranges of the individual variables, which may be Euclidean spaces or discrete (finite or countable) sets. There are, however, situations where the joint sample space has a different structure -- most importantly in the case of the analysis of register data. Registers contain records which list the features that characterize particular events. For example, in a registry of congenital malformations, an event is  when a child under a specific age threshold is diagnosed with a congenital malformation and the features are the abnormalities observed. Or, in a register of road traffic violations, the records describe the rules violated by the driver. The registers contain no records where none of the features is present. If there are $k$ features, the sample space is not the Cartesian product CP=$\{present, not \,\,present\}^k$, rather the incomplete product IP=$\{present, not \,\,present\}^k \setminus \{not \,\,present\}^k$.

While the combination of no feature present is not part of a register, such cases may exist and even their number may be known. For example, there are newborns without congenital abnormalities, their number is known, and the register data may be completed so that the sample space will be the full Cartesian product CP. On the other hand, an instance of driving or parking without a traffic rule being violated hardly makes any sense, and even if some definition was agreed on, the number of those instances could not be determined. Thus, the sample space is inherently incomplete in the case of a traffic violation register.

When the sample space is inherently incomplete, statistical modeling has to reflect upon this fact, in particular that in so called big data analysis, register data are often used (see, e.g., \cite{Eeden2019}, \cite{SNDS}), and the specific properties implied by having observed data in an IP are usually overlooked.
In the discrete case, the often applied multiplicative (or log-linear) models would associate a positive probability with the non-existing category combination, and therefore are not appropriate. This paper introduces multiplicative models on the IP and discusses many  of their properties.

When there are two features, the sample space has the structure shown in Table \ref{Tab1}. Obviously, the model of independence defined by the assumption that the odds ratio (OR) is equal to 1 does not apply.

The sample space depicted in Table \ref{Tab1}  also occurs in experimental design, when the combination of no features (no treatments) is not observed. For example, \cite{Kawamura1995} observed traps filled with sugarcane of fish or both as bait and recorded the number of swimming crabs entering the traps. To test the hypothesis of the independence of the effects of the two types of baits, one would need the number of those swimming crabs entering a trap with no bait. Without having included this experimental condition, one is left wondering whether or not crabs could be caught in a trap with no bait. If the answer is yes, then the model of independence of the effects of the two bait types is relevant, but cannot be tested based on the available data. If the answer is no, the model of independence cannot be applied, as it would associate a positive probability with the experimental condition of no bait. On the other hand, the models developed in this paper do apply in such situations, as well.

Section \ref{SectionPreliminaries}  discusses whether any of the properties of independence which is the simplest multiplicative model on the CP may be retained for the case of the IP, and gives a negative answer. Further, the concept of the Aitchison-Silvey ratio
$$
ASR(A,B) =   \frac{p_{11}}{p_{10}p_{01}} 
$$
is introduced. The models to be developed later in the paper are defined by restrictions on  ASRs and their generalizations.

\begin{table}
\centering
\begin{tabular}[t]{ccc}
& $B=0$& $B=1$\\
\hline \\ [-6pt]
$A=0$	& - & $p_{01}$\\
$A=1$ 	& $p_{10}$  &  $p_{11}$  \\[6pt]
\end{tabular}
\vspace{3mm}
\caption{Joint distribution of two binary features on IP.} 
\label{Tab1}
\end{table} 

Section \ref{SectionASRs} defines conditional and higher order ASRs, parallel to the usual definitions for ORs and also shows that they are the canonical parameters in an exponential family representation of all distributions on the IP. 
Section \ref{SectionMixedPar} considers mixed parameterizations of this exponential family on the IP using marginal distributions and higher order conditional ASRs. Also in this setup, a  variation independence property, up to a constant multiplier, holds between the two components.

\begin{table}
\centering        
\begin{tabular}{ccccc}
   & \multicolumn{2}{c}{$C=0$}&\multicolumn{2}{c}{$C=1$}\\
\cmidrule(lr){2-3}  \cmidrule(lr){4-5} 
   & $B=0$ & $B=1$ & $B=0$ & $B=1$ \\ [3pt]
\hline \\ 
$A=0$ &- & $p_{010}$ & $p_{001}$ & $p_{011}$\\
$A=1$ & $p_{100}$  & $p_{110}$ & $p_{101}$ & $p_{111}$\\
\end{tabular}
\vspace{5mm}
\caption{Joint distribution of three binary features on IP.} 
\label{Tab2}
\end{table}

When there are three binary features, $A$, $B$, $C$, the IP contains 7 cells and its structure is shown in Table \ref{Tab2}. 
If conditioned on $C=1$, there is a conditional odds ratio $COR(A,B|C=1)$ measuring the strength of association between $A$ and $B$, when $C$ is present. But as there is no $(0,0,0)$ cell, when conditioned on $C=0$, no COR exists. The existing modeling approaches in such situations include using context specific models  \citep[cf.][]{HosgaardCSImodels, NymanCSI}. Such a model would assume the independence of $A$ and $B$ in the context when $C=1$, but would make no restriction when $C=0$.

Another approach to analyzing IPs is applying quasi models \citep[cf.][]{Goodman68, Agresti83}. Such a model is usually defined by allowing exclusive multiplicative parameters in the cells where the model of interest is not assumed to hold, i.e., when $C=0$. This implies for log-linear models that the MLE reproduces those cell frequencies, and model misfit, if any, concentrates on the cells which are being modeled, i.e., when $C=1$. However, allowing exclusive parameters in the cells not of interest, does not imply that the MLE reproduces their observed frequencies in the more general model class of relational models, if the exponential family does not contain the overall effect.

Therefore, Section \ref{SectionNoOE} extends the definition of quasi models, as the collection of conditional distributions on the IP, when the model holds on the CP.    
This, however, does not affect the fact that a quasi model leaves nearly half of the sample space unmodeled and nearly half of the data do not enter the analysis based on the quasi model. The hierarchical AS (HAS) models proposed restrict, in addition to the CORs, also the conditional ASRs. For example, in the case of the sample space in (\ref{Tab2}), a conditional independence type model would assume that $COR(A,B|C=1)=1$ and also that conditional Aitchinson-Silvey ratio, 
$$
CASR(A,B|C=0) =   \frac{p_{110}}{p_{100}p_{010}},
$$
is equal to $1$.

The development of HAS models in Section \ref{SectionNoOE} is parallel to the usual treatment of hierarchical log-linear (HLL) models, (see, e.g., \cite{RudasSpringer1}) and comparisons are made to quasi log-linear (QLL) models. It is shown, among others, that QLL models are obtained from HAS models by allowing an overall effect and vice versa. \cite{KRoverEff} discussed in detail the effect of adding or removing the overall effect to/from an exponential family and some of the results presented in that paper are generalized here. While the treatment is parallel, there are many noticeable differences, as HAS models are shown to be relational models without the overall effect, which has wide ranging consequences for the model properties and maximum likelihood estimation, discussed later in the paper. The HAS variants of the conditional independence and no-highest-order interaction models are studies in detail.

Section \ref{SectionGeometry} looks at the relationship between HLL, QLL and HAS models using the approach of algebraic geometry, to complete the result of Section 5. Statistical models are characterized not only as varieties but also as their vanishing ideals, as the collection of all properties which may be used to characterize the models. It turns out that the relationship between the model classes discussed in the paper may be described by well established transformations studied in algebraic geometry. The main results obtained here are that the QLL is the elimination ideal of the HLL on the IP, and the HLL on the CP is the extension ideal of the QLL, HAS on the IP is obtained from the HLL through dehomogenization, while the transformation in the opposite dection is the so called projective closure. This creates a direct link between HLL and HAS models, while in statistical terms, the link only existed though the QLL models.

The most important consequences of HAS models being relational models without the overall effect is that MLEs reproduce the marginals of the interactions allowed in the model only up to a constant multiplier. Algorithmic issues are discussed in Section \ref{SectionModelSelection}. An existing R function \citep{gIPFpackage} may be used to obtain MLEs under HAS models.

\section{Preliminaries}\label{SectionPreliminaries}

This section considers some of the  difficulties of modeling distributions on IPs, which serve as motivation for the later developments in the paper. In particular, standard definitions of independence, which are equivalent in the  CP case, will be investigated from the perspective of application of independence to IPs. 

Assume that the joint distribution of two binary random variables, \emph{features}, $A$ and $B$ can be described using the complete Table \ref{Tab1complete}.

\begin{table}
\centering
\begin{tabular}[t]{ccc}
& $B=0$& $B=1$\\
\hline \\ [-6pt]
$A=0$	& $p_{00}$ & $p_{01}$\\
$A=1$ 	& $p_{10}$  &  $p_{11}$  \\[6pt]
\end{tabular}
\vspace{5mm}
\caption{Joint distribution of two binary features on CP.} 
\label{Tab1complete}
\end{table}

One way to define that $A$ and $B$ are independent is 
\begin{equation}\label{ind}
p_{ij} = p_{i.}p_{.j}, \,\, i,j = 0,1,
\end{equation}
where the dot means marginalization. This definition may appear to apply to the incomplete table where only $3$ out of the $4$ cells are present, as seen in Table \ref{Tab1}. 
However, in this case, one would have $p_{01}=p_{0.}p_{.1}=p_{01}p_{.1}$,
which can only hold if either $p_{01}=0$ or $p_{.1}=1$, contradicting the assumed positivity.

A seemingly weaker definition of independence for Table \ref{Tab1complete} is the assumption, that there exist parameters $\alpha_i$ and $\beta_j$, such that
\begin{equation}\label{mult}
p_{ij}= \alpha_i \beta_j, \,\, i,j = 0,1.
\end{equation}
For a complete table, this definition is equivalent to (\ref{ind}) \citep{RudasSpringer1}, but for an incomplete table, (\ref{mult}) would not be a restriction. For example, by taking
$$
\alpha_1 = 1, \beta_1=p_{11}, \beta_0=p_{10}, \alpha_0=p_{01}/p_{11},
$$
one obtains infinitely many choices of parameters for (\ref{mult}) to hold hold. In the case of $k$ features, one would still have $2^k$ parameters and $2^k-1$ cell probabilities, and the generalization of (\ref{mult}) would  not be a restriction either. Note that upon taking logarithms of both sides, any generalization of (\ref{mult}) would become a system of linear equations. 

The definition of independence in (\ref{ind}) may be seen as the combination of two requirements. One is the multiplicative structure (\ref{mult}), and the other one is, that the joint distribution has pre-specified one-way marginal distributions $p_{i.}$ and $p_{.j}$.  Because of this, for a complete table, an independent distribution may be seen as an extension of marginal probability distributions to the product space. For a general result regarding extension of probability distributions defined on a system of marginals, to a joint product space, see \cite{Kellerer1964}. The cylinder sets associated with either one of the features generate an algebra in the set-theoretical sense and arbitrary (probability) measures given on these may be extended to the generated algebra if and only if, any pair of nonempty sets from the two algebras have a non-empty intersection  \citep{Marczewski1948}. This property is called qualitative independence and a generalization to conditionally independent extensions was given by \cite{BartfaiRudas1988}. Qualitative independence does not hold for  the incomplete Table \ref{Tab1}. For example, the marginals $p_{0.}= 0.6$ , \,\, $p_{.0}=0.7$    
would imply that $p_{01}=0.6$ and $p_{10}=0.7$, and thus, cannot be extended.  

Independence on Table \ref{Tab1complete} can also be expressed using the odds ratio (OR), by requiring that 
\begin{equation}\label{OR2}
\frac{p_{11}p_{00}}{p_{01}p_{10}} = 1.
\end{equation}
Generalizations of the OR and of the constraint in (\ref{OR2}) may be used to define log-linear models, see \cite{BFH}, \cite{Agresti2002}, \cite{Fien2007}, among others.

In the case of the incomplete Table \ref{Tab1}, \cite*{KRD11} called two features independent if 
\begin{equation}\label{ASind}
\frac{p_{11}}{p_{01}p_{10}} = 1.
\end{equation}
This definition was later referred to as the Aitchison-Silvey independence, to reflect upon the connection with a statistical model analogous to (\ref{ASind}) for the case of three variables \citep{AitchSilvey60}, which is described in the next example: 

\begin{example}\label{ExampleAS3}
Let $A$, $B$, and $C$ be binary random variables, representing certain features of subjects in a population, and assume that all subjects possess at least one feature (no unaffected cases).  The sample space can therefore be described using Table \ref{Tab2}. 
Here $p_{ijk} > 0$ for all $i,j,k \in \{0,1\}$, except for $i=j=k=0$, are the probabilities of possessing the corresponding combinations of features, and $\sum_{i,j,k} p_{ijk} = 1$.   According to the definition of \cite{AitchSilvey60}, $A, B, C$ are independent if, for some positive parameters $\alpha, \beta,\gamma$: 
\begin{eqnarray}\label{Bliss3param}
p_{100} &=& \alpha, \quad p_{010} = \beta,\quad p_{001} = \gamma, \nonumber \\
 p_{110} &=& \alpha \beta, \,\,  p_{101} = \alpha \gamma, \,\,  p_{011} = \beta\gamma, \,\,  p_{111} = \alpha \beta \gamma,
\end{eqnarray}
or, equivalently:
\begin{eqnarray}\label{Bliss3prob}
\frac{p_{110}}{p_{100} p_{010}} = 1, \,\, \frac{p_{101}}{p_{100}p_{001}} = 1, \,\, \frac{p_{011}}{p_{010}p_{001}} = 1, \,\,  \frac{p_{111}}{p_{100}p_{010}p_{001}} = 1.
\end{eqnarray}
\qed
\end{example}

In later parts of this paper, a generalization of the ratios appearing in the left hand side of (\ref{ASind}) and (\ref{Bliss3prob}) will be proposed and further used to build models for incomplete tables. The ratios will be called Aitchison-Silvey ratios (ASR), to emphasize the connection with the previous work of these authors.

Another approach to obtain a model on an incomplete table is to use a model on the corresponding complete one. Quasi-log-linear models, and, in particular, quasi-independence, are traditionally defined as models for the complete tables, with additional and unique multiplicative parameters included in the cells which are absent in the incomplete table \citep[cf.][]{BFH, FienbergIncomplete1972,CloggShihadeh}. For the setup in Example \ref{ExampleAS3}, the quasi-log-linear model of complete independence would be 
\begin{equation}\label{quasi}
p_{ijk}= \gamma \delta_i \epsilon_j \iota_k, \,\, i, j, k = 0, 1, \,\, (i,j,k) \neq (0,0,0), 
\end{equation}
$$
p_{000}= \gamma \delta_i \epsilon_j \iota_k \kappa.
$$
The MLE of $p_{000}$ under such a model is equal to the observed probability in the cell $(0,0,0)$ \citep{BFH}. Therefore, augmenting the data with a zero in the cell $(0,0,0)$ entails that the MLE will be concentrated on the incomplete table, and is considered to be the MLE under the quasi-independence model.

In this paper, a quasi-model will be derived from the model on the complete table by conditioning on the corresponding incomplete sub-table. The equivalence of the definition using exclusive parameters to obtain exact fit in certain cells and of the definition based on conditioning is mentioned, for example, in \cite{CloggShihadeh}. The relationship between  the traditional log-linear models, the corresponding quasi-models, and models restricting AS ratios will be investigated in more detail in the forthcoming sections.

\section{Conditional and higher order  ASRs}\label{SectionASRs}

In this section, a recursive definition of  conditional AS ratios is given and their properties, which are comparable to properties  of the conditional odds ratios (CORs),  are discussed.  

Let $\mathcal{F} = \{ F_1, \ldots, F_k \}$ be a set of binary random variables taking values in the set $\{0,1\}$, equal to the indicators of certain features of subjects in a given population. Denote by $\mathcal{I}_0 = \{0,1\} \times \cdots \times \{0,1\}$ the Cartesian product of the ranges written as a vector of 0--1 sequences of length $k$. Assume further that every subject possesses at least one feature of interest, and therefore,  the sample space of $F_1, \dots, F_k$ is the incomplete Cartesian product  $\mathcal{I} = \mathcal{I}_0 \mysetminusA \{(0,\dots,0)\}$. In the sequel,  $\mathcal{I}_0$ and $\mathcal{I}$ are referred to as CP and IP respectively, each sequence $\boldsymbol i$ in either CP or IP will be called a cell, with $\boldsymbol i_0 =(0,\dots,0) \in \mathcal{I}_0$ being called the zero cell. Finally, assume that the joint distribution of $F_1, \dots, F_k$ is parameterized by cell probabilities $\boldsymbol p =(p(\boldsymbol i))$. The cardinality of $\mathcal{I}$ will be denoted by $I$: $|\mathcal{I}| = 2^k -1 = I$.

The Aitchison-Silvey ratio for the two-way IP described in Table \ref{Tab1}  
is the following function of $\boldsymbol p$:
\begin{equation}\label{ASR}
{ASR} = \frac{p_{11}}{p_{01} p_{10}}.
\end{equation} 
The higher order ASRs calculated in conditional tables are called conditional ASRs (CASR) and illustrated next. Consider a probability distribution on the three-way IP in Table \ref{Tab2}. By conditioning on $A=1$, one obtains a two-way CP, while conditioning on $A=0$ leads to a two-way IP. The odds ratio in the former one is the COR:
$$
{COR}(B,C|A=1) = \frac{p_{111}p_{100}}{p_{110}p_{101}}.
$$
and the ASR in the latter one is the CASR:
$$
{CASR}(B,C|A=0) = \frac{p_{011}}{p_{010}p_{001}}.
$$

Just like for conditional odds ratios, the following result is obtained by straightforward calculations:
\begin{equation}\label{common}
\frac{{COR}(B,C|A=1)}{{CASR}(B,C|A=0)} = \frac{{COR}(A,C|B=1)}{{CASR}(A,C|B=0)} = \frac{{COR}(A,B|C=1)}{{CASR}(A,B|C=0)}.
\end{equation}
This means, that the ratio of the COR to the CASR in a three-way IP, does not depend on which variable is in the condition, thus it reflects a property of all the three variables. The common value in (\ref{common}), is equal to
\begin{equation}\label{2oASR}
{ASR}(A, B, C) = \frac{p_{111}p_{100}p_{010}p_{001}}{p_{110}p_{101}p_{011}},
\end{equation}
and will be called the second order AS ratio. The numerator is the product of the probabilities of those cells where the number of indices equal to $1$ has the same parity as the number of features, and the denominator is the product of the probabilities of those cells, where the parity is different. Conventionally, the second order OR would use the parity of the $0$ indices in the same way:
$$
{OR}(A, B, C) = \frac{p_{000}p_{011}p_{101}p_{110}}{p_{111}p_{001}p_{010}p_{100}}
$$
Here, the ratios in (\ref{common}) and in (\ref{2oASR}) show the effect of having a feature present as opposed to not having it present.  Using these rather than the reciprocals seems more meaningful in the current context. 

Higher order ASRs are defined similarly, in a recursive way. Once the ASR was defined for a $(k-1)$-dimensional IP, choose a feature, say $F_s$, and partition the IP into the $(k-1)$-dimensional CP, obtained by conditioning on $F_s = 1$, and the $(k-1)$-dimensional IP, obtained by conditioning on $F_s = 0$.  It turns out that the ratio of the COR in the former table and the CASR in the latter one does not depend on the feature upon which the conditioning occurred. This ratio, to be called the $k$-way or $(k-1)$st order AS ratio for $F_1, \ldots, F_k$ on the IP $\mathcal{I}$, is equal to:
\begin{equation}\label{ASRdef}
{ASR}(F_1, \ldots, F_k)=\frac{\prod_{\boldsymbol i \in \mathcal{I}_{pk}}p({\boldsymbol i})}{\prod_{\boldsymbol i \in \bar{\mathcal{I}}_{pk}}p({\boldsymbol i})},
\end{equation}
where $\mathcal{I}_{pk}$ is the subset of $\mathcal{I}$ that comprises the cells where the parity of the $1$ indices is the same as the parity of $k$, and $\bar{\mathcal{I}}_{pk} = \mathcal{I} \mysetminusA \mathcal{I}_{pk}$. Notice that the parity of $1$'s  in the cell $(1, \dots, 1)$ is equal to the parity of $k$, thus the probability of this cell always appears in the numerator.

In order to give a formal description of conditional AS ratios, denote by $\mathcal{V}= 2^{\mathcal{F}}$ the power set of $\mathcal{F}$, and let $\phi: \mathcal{I} \rightarrow (\mathcal{V} \mysetminusA \{\bfs{\varnothing}\} )$ be the function, such that, for every $\boldsymbol i \in \mathcal{I}$,
\begin{align*}
\phi(\boldsymbol{i}) = \{F_{{i}_1}, \ldots,  F_{{i}_l} \} & \mbox{ if and only if the indices equal to } 1 \makebox{ in } \boldsymbol{i} \\
&\makebox{ are exactly in the positions } {i}_1, \ldots, {i}_l.
\end{align*}

To condition, for example, on the complement of $\phi(\boldsymbol i) $ being zero, one chooses only cells $\boldsymbol j$ with $\phi(\boldsymbol{j}) \subseteq \phi(\boldsymbol{i}) $,  and computes the ASR for these.
Denote by $\mathcal{I}_{pi}$ the set of cells, where the parity of indices equal to $1$ is the same as in cell $\boldsymbol i$ and let $\bar{\mathcal{I}}_{pi}$ be the rest of the cells of the IP. Then,
\begin{equation}\label{CASRdef}
{CASR}(\phi(\boldsymbol{i})| \mathcal{F}\mysetminusA \phi(\boldsymbol{i}) =0)=\frac{\prod_{\boldsymbol j \in \mathcal{I}_{pi}, \phi(\boldsymbol j) \subseteq \phi(\boldsymbol i)}p({\boldsymbol j})}{\prod_{\boldsymbol j \in \bar{\mathcal{I}}_{pi}, \phi(\boldsymbol j) \subseteq \phi(\boldsymbol i)}p({\boldsymbol j})},
\end{equation}

The following proposition describes the recursive nature of conditional AS ratios:

\begin{proposition} 
Assume that the total number of features $k \geq 3$, $|\phi(\boldsymbol{i})| > 2$, and $\tilde{F}$ be a subset of features such that: $\tilde{F}\subsetneq \phi(\boldsymbol{i})$, with $|\tilde{F}| = 1$.  Then,                                             
$${CASR}(\phi(\boldsymbol{i}) \mid \mathcal{F} \mysetminusA \phi(\boldsymbol{i})  = \boldsymbol 0) =\frac{ {COR}(\phi(\boldsymbol{i}) \mysetminusA \tilde{F} \mid \tilde{F}= 1, \,\mathcal{F} \mysetminusA \phi(\boldsymbol{i}) = \boldsymbol 0)} {{CASR}(\phi(\boldsymbol{i}) \mysetminusA \tilde{F} \mid \tilde{F} = 0, \,\mathcal{F} \mysetminusA \phi(\boldsymbol{i}) = \boldsymbol 0)}. $$\qed
\end{proposition}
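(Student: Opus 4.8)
The plan is to verify the identity by substituting the defining parity formula (\ref{CASRdef}) (and the analogous higher-order odds-ratio formula) into every quantity, treating each ratio as a single signed product of cell probabilities and then matching factors. Throughout I would write $S=\phi(\boldsymbol{i})$, $m=|S|\ge 3$, and $f$ for the unique feature in $\tilde F$, so that $S':=\phi(\boldsymbol{i})\mysetminusA\tilde F$ has $m-1\ge 2$ features. Since all three ratios are computed under the conditioning $\mathcal F\mysetminusA S=\boldsymbol 0$, only cells $\boldsymbol j$ with $\phi(\boldsymbol j)\subseteq S$ occur; I would index these by the subsets $T=\phi(\boldsymbol j)\subseteq S$ and abbreviate their probabilities by $p(T)$, the zero cell being excluded exactly by requiring $T\neq\bfs{\varnothing}$.

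First I would rewrite the left-hand side, via (\ref{CASRdef}), as one product $\prod_{\bfs{\varnothing}\neq T\subseteq S}p(T)^{\varepsilon(T)}$, where $\varepsilon(T)=+1$ iff $|T|\equiv m\pmod{2}$ and $\varepsilon(T)=-1$ otherwise. In the same way, $CASR(S'\mid\tilde F=0,\ \mathcal F\mysetminusA S=\boldsymbol 0)$ is the product over $\bfs{\varnothing}\neq U\subseteq S'$ of $p(U)^{\varepsilon'(U)}$, now with reference parity that of $m-1$, and $COR(S'\mid\tilde F=1,\ \mathcal F\mysetminusA S=\boldsymbol 0)$ is the top-order odds ratio of the $m-1$ features $S'$ in the complete subtable obtained by setting $f=1$ (whose cells are $\{f\}\cup U$ for $U\subseteq S'$), i.e.\ the product of $p(\{f\}\cup U)^{\delta(U)}$ with $\delta(U)=+1$ exactly when $|U|\equiv m-1\pmod{2}$.

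The core step is a bookkeeping split of the left-hand index set according to whether $f\in T$. For cells with $f\in T$ I would write $T=\{f\}\cup U$ with $U\subseteq S'$; since $|T|=|U|+1$, the parity of $|T|$ relative to $m$ coincides with the parity of $|U|$ relative to $m-1$, so $\varepsilon(\{f\}\cup U)=\delta(U)$ and these factors assemble precisely into $COR(S'\mid\tilde F=1,\dots)$ (the boundary case $U=\bfs{\varnothing}$, i.e.\ the cell $\{f\}$, lies in the IP and obeys the same parity rule). For cells with $f\notin T$ I would write $T=U$ with $\bfs{\varnothing}\neq U\subseteq S'$; here $|T|=|U|$, but the reference parity is that of $m$ on the left and that of $m-1$ in the inner CASR, and since consecutive integers differ in parity these exponents are opposite, $\varepsilon(U)=-\varepsilon'(U)$, so these factors assemble into $CASR(S'\mid\tilde F=0,\dots)^{-1}$. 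Multiplying the two groups yields the claimed quotient.

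I expect the only real obstacle to be keeping the two parity conventions aligned: the reference parity for the outer ASR is that of $m$, for the inner CASR it is that of $m-1$, and the numerator odds ratio follows the standard variable-count convention generalizing (\ref{OR2}), which must be shown to match the shifted ASR parity after the index change $T=\{f\}\cup U$. Once the single-unit shift $|T|=|U|+1$ is tracked carefully the factors cancel and recombine with no further computation, and the hypothesis $|\phi(\boldsymbol i)|>2$ is exactly what guarantees that the inner $CASR$ on $S'$ remains a genuine conditional AS ratio rather than a degenerate expression.
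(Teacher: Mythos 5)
Your argument is correct, but it is not the route the paper takes: the paper omits the proof entirely, remarking only that it ``goes by induction,'' i.e.\ it would build the identity up by peeling off one feature at a time, mirroring the recursive way the higher-order ASRs were introduced. What you do instead is a one-step direct verification from the closed-form parity formula (\ref{CASRdef}): writing every ratio as $\prod_T p(T)^{\pm 1}$ with sign governed by $|T|$ modulo the reference parity, splitting the index set of the left-hand side according to whether $\tilde F\subseteq T$, and observing that the shift $|T|=|U|+1$ aligns the signs with the numerator $COR$ while the unshifted cells acquire the opposite sign to the inner $CASR$ (consecutive integers having opposite parity), which places that factor in the denominator. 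I checked the bookkeeping: the sign conventions you adopt ($\varepsilon(T)=+1$ iff $|T|\equiv m$, $\delta(U)=+1$ iff $|U|\equiv m-1$) are exactly the ones the paper uses in (\ref{common}) and (\ref{ASRdef}), the cell $\{f\}$ is correctly retained as the $U=\bfs{\varnothing}$ corner of the complete subtable at $\tilde F=1$, and the zero cell is correctly excluded only from the $\tilde F=0$ slice, which is why that slice yields a $CASR$ rather than a $COR$. Your approach buys transparency and self-containedness --- no induction is needed once both sides have explicit product formulas --- whereas the paper's intended induction would be the natural choice if one took the recursion itself as the definition and wanted to derive the closed form (\ref{ASRdef}) as a consequence; since the paper states (\ref{CASRdef}) explicitly, your direct computation is a legitimate and arguably cleaner substitute.
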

\noindent The proof goes by induction and is omitted for the sake of brevity.

To see the role of the AS ratios from a different perspective, consider the set of all (positive) probability distributions on the $k$-way IP.  This set forms an exponential family, which is apparent, in particular,  from the  parameterization to be developed now.
Fix an order of the cells $\boldsymbol{i}$ of the IP and consider the subsets of the features $\phi(\boldsymbol{i})$ in the same order. Here and in the sequel, the reverse lexicographic order is used. Let $\mathbf{S}$ be the $(2^k-1) \times (2^k-1) $ matrix with entries
\begin{equation}\label{MatrixCornerGeneralt}
s_{\boldsymbol j\boldsymbol i} = \left\{ \begin{array}{cl} 1, & \mbox{ if } \,\,\phi(\boldsymbol j) \subseteq \phi(\boldsymbol i)\\
0, & \mbox{  otherwise}
\end{array}\right. \quad \mbox{for } \,\, \boldsymbol i, \boldsymbol j \in \mathcal{I}.
\end{equation}
\noindent Here, the row index $\boldsymbol j$ refers to a (non-empty) subset of $\mathcal{F}$, and the column index $\boldsymbol i$ refers to the cells in the IP $\mathcal{I}$.

\begin{example}\label{k=3t}
For the IP in (\ref{Tab2}), the reverse lexicographic order entails the correspondence:
$$
\begin{array}{llll}
(1,0,0) \leftrightarrow \{A \}, & (0,1,0) \leftrightarrow \{B \}, & (0,0,1) \leftrightarrow \{C \},& 
(1,1,0) \leftrightarrow \{A,B \}, \\
(1,0,1) \leftrightarrow \{A,C \},& (0,1,1) \leftrightarrow \{B,C \}, & (1,1,1) \leftrightarrow \{A,B,C \},  \\
\end{array}
$$
which implies that
\begin{equation}\label{cornerPar111}
\mathbf{S}=\left(\begin{array}{ccccccc}
1 & 0 & 0 & 1 & 1 & 0 & 1 \\ 
0 & 1 & 0 & 1 & 0 & 1 & 1 \\ 
0 & 0 & 1 & 0 & 1 & 1 & 1 \\ 
0 & 0 & 0 & 1 & 0 & 0 & 1 \\ 
0 & 0 & 0 & 0 & 1 & 0 & 1 \\ 
0 & 0 & 0 & 0 & 0 & 1 & 1 \\
0 & 0 & 0 & 0 & 0 & 0 & 1\\
\end{array}\right).
\end{equation}
\hfill{ } \qed
\end{example}

As the following result shows, the matrix $\mathbf{S}$ is invertible and thus may be used as a design matrix of a parameterization
\begin{equation}\label{paramt}
\log \boldsymbol p = \mathbf S \tr \boldsymbol{\beta},
\end{equation}
for some parameters $\boldsymbol{\beta}$. Such a parameterization is called a corner or effect parameterization \citep{RudasSpringer1}. 

If one uses the subsets of $\mathcal{F}$  as indices of the components of $\boldsymbol{\beta}$,  (\ref{paramt}) may be rewritten as 
\begin{equation}\label{param1t}
\log p(\boldsymbol i ) = \sum_{\boldsymbol j: \phi(\boldsymbol j)\subseteq \phi(\boldsymbol i)}  \beta_{\phi(\boldsymbol j)}.
\end{equation}

\begin{theorem} \label{TheoremParamt1} The following holds:
\begin{enumerate}[(i)]
\item The matrix $\mathbf{S}' = (s_{\boldsymbol i\boldsymbol j})$ is invertible;
\item The entries of $(\mathbf{S}\tr)^{-1}$ are equal to
\begin{equation}\label{invent}
\tilde{s}_{\boldsymbol i\boldsymbol j} = \left \{\begin{array}{cl}
(-1)^{|\phi({\boldsymbol i})\mysetminusA \phi({\boldsymbol j})|}, \,\, & \mbox{if } \,\, \phi({\boldsymbol j}) \subseteq \phi({\boldsymbol i})\\
0, & \mbox{  otherwise}.
\end{array}\right. \quad \mbox{for } \,\, \boldsymbol i, \boldsymbol j \in \mathcal{I}.
\end{equation}
\end{enumerate}
\end{theorem}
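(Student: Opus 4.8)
The plan is to recognize $\mathbf S$ as the zeta matrix of the inclusion order on the nonempty subsets of $\mathcal F$, so that statement (ii) is precisely the computation of the associated M\"obius function, while statement (i) drops out of a triangularity observation.

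For (i), I would first note that the reverse lexicographic order used to index the rows and columns is a linear extension of inclusion: if $\phi(\boldsymbol j)\subsetneq\phi(\boldsymbol i)$ then $|\phi(\boldsymbol j)|<|\phi(\boldsymbol i)|$, so $\boldsymbol j$ precedes $\boldsymbol i$. Hence $s_{\boldsymbol j\boldsymbol i}=1$ can occur only when $\boldsymbol j$ weakly precedes $\boldsymbol i$, so $\mathbf S$ is upper triangular, and its diagonal entries $s_{\boldsymbol i\boldsymbol i}=1$ are all nonzero since $\phi(\boldsymbol i)\subseteq\phi(\boldsymbol i)$. Thus $\mathbf S$ is upper unitriangular with $\det\mathbf S=1$ (the shape is visible in (\ref{cornerPar111})), so $\mathbf S$, and therefore its transpose $\mathbf S\tr=\mathbf S'$, is invertible.

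For (ii), let $\tilde{\mathbf S}$ be the matrix with the proposed entries $\tilde s_{\boldsymbol i\boldsymbol j}$ and verify directly that $\tilde{\mathbf S}\,\mathbf S\tr=\mathbf I$; since the matrices are square, this one-sided identity suffices. Using $(\mathbf S\tr)_{\boldsymbol j\boldsymbol m}=s_{\boldsymbol m\boldsymbol j}=[\,\phi(\boldsymbol m)\subseteq\phi(\boldsymbol j)\,]$ together with the support condition $\phi(\boldsymbol j)\subseteq\phi(\boldsymbol i)$ in $\tilde s_{\boldsymbol i\boldsymbol j}$, the $(\boldsymbol i,\boldsymbol m)$ entry of the product is
\[
\sum_{\boldsymbol j}\tilde s_{\boldsymbol i\boldsymbol j}\,[\,\phi(\boldsymbol m)\subseteq\phi(\boldsymbol j)\,]=\!\!\sum_{\phi(\boldsymbol m)\subseteq\phi(\boldsymbol j)\subseteq\phi(\boldsymbol i)}\!\!(-1)^{|\phi(\boldsymbol i)\setminus\phi(\boldsymbol j)|}.
\]
Writing $I=\phi(\boldsymbol i)$, $M=\phi(\boldsymbol m)$ and parameterizing the admissible $J=\phi(\boldsymbol j)$ by $T=J\setminus M\subseteq I\setminus M$ with $d=|I\setminus M|$, this reduces to the alternating binomial sum
\[
\sum_{T\subseteq I\setminus M}(-1)^{d-|T|}=(-1)^d\sum_{t=0}^{d}\binom{d}{t}(-1)^{t}=(-1)^d(1-1)^d,
\]
which equals $1$ when $d=0$ (i.e.\ $I=M$, equivalently $\boldsymbol i=\boldsymbol m$) and $0$ otherwise. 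Hence the product is $\mathbf I$ and $\tilde{\mathbf S}=(\mathbf S\tr)^{-1}$.

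I do not expect a serious obstacle here: the content is exactly that the inverse of the Boolean-lattice zeta matrix is its M\"obius matrix, and one could equally invoke M\"obius inversion rather than compute. The only points needing care are bookkeeping ones: transcribing the transpose and the two containment directions correctly, and observing that every $J$ in the interval from $M$ to $I$ is automatically nonempty (because $M=\phi(\boldsymbol m)\neq\varnothing$), so the summation never leaves the admissible index set of nonempty subsets.
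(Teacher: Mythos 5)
Your proof is correct. Part (i) is identical to the paper's argument (upper unitriangular, determinant one). For part (ii) the underlying fact is the same — the inverse of the zeta matrix of the Boolean lattice is its M\"obius matrix — but the route differs in a small, genuine way: the paper argues indirectly, defining $\gamma_{\phi(\boldsymbol i)}=\sum_{\phi(\boldsymbol j)\subseteq\phi(\boldsymbol i)}(-1)^{|\phi(\boldsymbol i)\setminus\phi(\boldsymbol j)|}\log p(\boldsymbol j)$, invoking the M\"obius inversion formula as a cited black box to recover $\log p(\boldsymbol i)=\sum_{\phi(\boldsymbol j)\subseteq\phi(\boldsymbol i)}\gamma_{\phi(\boldsymbol j)}$, and then concluding $\boldsymbol\beta=\boldsymbol\gamma$ from the uniqueness of the corner parameters, whereas you verify $\tilde{\mathbf S}\,\mathbf S\tr=\mathbf I$ directly via the alternating binomial sum $\sum_{T\subseteq I\setminus M}(-1)^{d-|T|}=(-1)^d(1-1)^d$. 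Your version is more self-contained (it effectively reproves M\"obius inversion for this lattice rather than citing it) and correctly handles the one bookkeeping subtlety the paper glosses over, namely that every set in the interval $[M,I]$ is nonempty and hence indexes an actual cell of $\mathcal I$; the paper's version has the advantage of making the statistical meaning of the inverse transparent, since it identifies the rows of $(\mathbf S\tr)^{-1}$ with the log-linear parameters $\boldsymbol\beta$ that are then interpreted as CASRs in Theorem \ref{TheoremParamt2}. Either argument is complete.
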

\begin{proof}

(i) From the definition of $s_{\boldsymbol j\boldsymbol i}$, $\mathbf{S}$ is an upper-triangular matrix, with the main diagonal consisting of $1$'s, so $det(\mathbf S) = 1$ and thus $\mathbf{S}$ and $\mathbf{S}'$ are invertible. 
 
(ii)  Part (i) implies that the parameter vector $\boldsymbol{\beta}$ is determined uniquely, once $\boldsymbol{p}$ is given.  Define
\begin{equation}\label{param3t}
\boldsymbol \gamma_{\phi({\boldsymbol i})} = \sum_{\boldsymbol j: \,\phi({\boldsymbol j}) \subseteq \phi({\boldsymbol i})} (-1)^{|\phi({\boldsymbol i}) \mysetminusA \phi({\boldsymbol j})|} \log \boldsymbol p(\boldsymbol j).
\end{equation}
Apply the Moebius inversion formula \citep[cf.][]{CameronCombi} to (\ref{param3t}), to obtain that 
\begin{equation}\label{param1t}
\log p(\boldsymbol i ) = \sum_{\boldsymbol j: \phi(\boldsymbol j)\subseteq \phi(\boldsymbol i)}  \gamma_{\phi(\boldsymbol j)}
\end{equation}
and then by (\ref{paramt}) and unicity, 
\begin{equation}\label{eqt}
\boldsymbol{\beta}=\boldsymbol{\gamma}.
\end{equation}

Let $\tilde{\mathbf{S}}$ be the $I \times I$ matrix with entries given in (\ref{invent}).
Then, equations (\ref{param3t}) and (\ref{eqt}) imply that  $\boldsymbol \beta = \tilde{\mathbf S} \log \boldsymbol p$, and thus $\tilde{\mathbf{S}} =(\mathbf{S}\tr)^{-1}$.

\end{proof}

Next,  the structure of the parameters defined in  (\ref{paramt}) is investigated. 

\begin{example}\label{k=2t}
For the IP in (\ref{Tab1}) one has: $\phi(1,0) = \{A\}$, \,\, $\phi(0,1) = \{B\}$, \,\, $\phi(1,1) = \{A,B\}$, 
\begin{equation}\label{AS2matrices}
\mathbf{S} = \left(\begin{array}{ccc} 1 & 0& 1\\
                                                     0 & 1 & 1\\
                                                     0 & 0 & 1\\ 
                            \end{array}\right), \qquad (\mathbf{S}\tr)^{-1}= \left(\begin{array}{rrr} 1 & 0& 0\\
                                                     0 & 1 & 0\\
                                                     -1 & -1 & 1\\ 
                            \end{array}\right).
                            \end{equation}
                            and therefore, $\boldsymbol \beta' = \left(\log p_{10}, \log p_{01}, \log({p_{11}}/({p_{10}p_{01}}))\right)$. 

\hfill{ } \qed
\end{example}

As the example shows, the parameters $\boldsymbol{\beta}$ can be the logarithms of either the cell probabilities or the ASRs. The result holds in general:

\begin{theorem} \label{TheoremParamt2} 
The parameters $\boldsymbol \beta =(\mathbf{S}\tr)^{-1} \log \boldsymbol p$ are the natural logarithms of  cell probabilities or of CASRs of the features in $\mathcal F$. 
\end{theorem}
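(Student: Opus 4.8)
The plan is to read off the components of $\boldsymbol\beta$ directly from the explicit inverse supplied by Theorem~\ref{TheoremParamt1}(ii) and then identify each component as a log cell probability or a log CASR through a parity argument. Since $\boldsymbol\beta = (\mathbf S\tr)^{-1}\log\boldsymbol p$ and the entries of $(\mathbf S\tr)^{-1}$ are the $\tilde s_{\boldsymbol i\boldsymbol j}$ of (\ref{invent}), the component indexed by the subset $\phi(\boldsymbol i)$ is
$$
\beta_{\phi(\boldsymbol i)} = \sum_{\boldsymbol j:\,\phi(\boldsymbol j)\subseteq\phi(\boldsymbol i)} (-1)^{|\phi({\boldsymbol i})\mysetminusA \phi({\boldsymbol j})|}\,\log p(\boldsymbol j),
$$
a signed sum of logarithms of cell probabilities ranging over all cells $\boldsymbol j$ whose feature set is contained in $\phi(\boldsymbol i)$ (equivalently, this is (\ref{param3t}) together with the identification (\ref{eqt})).

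The crux is to convert the Moebius sign into the parity condition underlying the definition (\ref{CASRdef}). First I would note that for $\phi(\boldsymbol j)\subseteq\phi(\boldsymbol i)$ one has $|\phi({\boldsymbol i})\mysetminusA \phi({\boldsymbol j})| = |\phi(\boldsymbol i)| - |\phi(\boldsymbol j)|$, and that the number of indices equal to $1$ in the cell $\boldsymbol j$ is exactly $|\phi(\boldsymbol j)|$. Hence the sign $(-1)^{|\phi(\boldsymbol i)|-|\phi(\boldsymbol j)|}$ equals $+1$ precisely when $|\phi(\boldsymbol j)|$ and $|\phi(\boldsymbol i)|$ have the same parity, i.e.\ when $\boldsymbol j\in\mathcal I_{pi}$, and equals $-1$ precisely when $\boldsymbol j\in\bar{\mathcal I}_{pi}$. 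Splitting the sum accordingly and exponentiating gives
$$
\exp\bigl(\beta_{\phi(\boldsymbol i)}\bigr) = \frac{\prod_{\boldsymbol j\in\mathcal I_{pi},\,\phi(\boldsymbol j)\subseteq\phi(\boldsymbol i)} p(\boldsymbol j)}{\prod_{\boldsymbol j\in\bar{\mathcal I}_{pi},\,\phi(\boldsymbol j)\subseteq\phi(\boldsymbol i)} p(\boldsymbol j)},
$$
which is exactly ${CASR}(\phi(\boldsymbol i)\mid\mathcal F\mysetminusA\phi(\boldsymbol i)=0)$ as in (\ref{CASRdef}); therefore $\beta_{\phi(\boldsymbol i)} = \log{CASR}(\phi(\boldsymbol i)\mid\mathcal F\mysetminusA\phi(\boldsymbol i)=0)$.

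Finally I would dispose of the degenerate case producing the ``cell probability'' alternative in the statement. When $\phi(\boldsymbol i)$ is a singleton $\{F_s\}$, the only cell $\boldsymbol j\in\mathcal I$ with $\phi(\boldsymbol j)\subseteq\{F_s\}$ is $\boldsymbol j=\boldsymbol i$ itself, so the sum collapses to $\beta_{\{F_s\}}=\log p(\boldsymbol i)$, the log probability of the cell carrying only feature $F_s$; this matches Example~\ref{k=2t}, where the first two components are $\log p_{10}$ and $\log p_{01}$. The only point requiring care — and the main, if modest, obstacle — is the bookkeeping that $\phi$ takes values in the \emph{nonempty} subsets, so the zero cell never enters any of these sums: this is what turns the singleton components into genuine log cell probabilities rather than CASRs, and it is also why the parity count in (\ref{CASRdef}) and the Moebius sign in (\ref{invent}) agree term by term. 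A one-line check against (\ref{2oASR}) for $\phi(\boldsymbol i)=\{A,B,C\}$ confirms that the top-order component equals $\log{ASR}(A,B,C)$.
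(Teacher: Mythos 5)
Your proof is correct and takes essentially the same route as the paper's: both read $\beta_{\phi(\boldsymbol i)}$ off the Moebius-type inverse from Theorem~\ref{TheoremParamt1}(ii) and convert the sign $(-1)^{|\phi(\boldsymbol i)\mysetminusA\phi(\boldsymbol j)|}$ into the parity condition defining $\mathcal I_{pi}$ versus $\bar{\mathcal I}_{pi}$, then exponentiate to recover (\ref{CASRdef}). The only cosmetic difference is that you fold the two-feature case into the general parity argument, whereas the paper treats $|\phi(\boldsymbol i)|=2$ as a separate explicit case before giving the parity argument for $|\phi(\boldsymbol i)|>2$.
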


\begin{proof}

For $k=2$, the previous example shows the claim is true.

If $k > 2$, Theorem \ref{TheoremParamt1} implies that 
\begin{equation}\label{betas}
\boldsymbol \beta_{\phi({\boldsymbol i})} = \sum_{\boldsymbol j: \,\phi({\boldsymbol j}) \subseteq \phi({\boldsymbol i})} (-1)^{|\phi({\boldsymbol i}) \mysetminusA \phi({\boldsymbol j})|} \log \boldsymbol p(\boldsymbol j).
\end{equation}
If $|\phi(\boldsymbol i)|=1$, then
$$
\boldsymbol \beta_{\phi({\boldsymbol i})} =  \log p({\boldsymbol i}).
$$
If $|\phi(\boldsymbol i)|=2$, then $\phi(\boldsymbol i) = \{F_{i_1}, F_{i_2}\}$, for some $i_1$ and $i_2$. Therefore, from (\ref{betas}), 
\begin{align*}
\boldsymbol \beta_{\phi({\boldsymbol i})} &=  \log p(\phi^{-1} (\{F_{i_1},F_{i_2}\}) - \log p(\phi^{-1} (F_{i_1})) -  \log p(\phi^{-1} (F_{i_2})) \\
&{}\\
& = \log \frac{p(\phi^{-1} (\{F_{i_1},F_{i_2}\}))}{ p(\phi^{-1} (F_{i_1}))p(\phi^{-1} (F_{i_2})) } 
=\log  {CASR}(F_{i_1},F_{i_2}\mid \mathcal{F} \mysetminusA\{F_{i_1},F_{i_2}\} = 0).
\end{align*}
Finally, let $|\phi(\boldsymbol i)| = m >2$. Then,
for all $\boldsymbol j$, such that $\phi({\boldsymbol j}) \subseteq \phi({\boldsymbol i})$, one has:
\begin{align*}
& |\phi({\boldsymbol i}) \mysetminusA \phi({\boldsymbol j})|  \mbox{ is even if and only if } \boldsymbol j \in \mathcal{I}_{pi}, \\
& |\phi({\boldsymbol i}) \mysetminusA \phi({\boldsymbol j})|  \mbox{ is odd if and only if } \boldsymbol j \in \bar{\mathcal{I}}_{pi},
\end{align*}
(the sets $\mathcal{I}_{pi}$ and $\bar{\mathcal{I}}_{pi}$ were defined above (\ref{CASRdef})).

Therefore, equation (\ref{betas}) implies that
\begin{align*}
\boldsymbol \beta_{\phi({\boldsymbol i})} &=   \log \frac{\prod_{\boldsymbol j: \,\phi({\boldsymbol j}) \subseteq \phi({\boldsymbol i}), \boldsymbol j \in \mathcal{J}_{\boldsymbol i, e}}p(\boldsymbol j)}{\prod_{\boldsymbol j: \,\phi({\boldsymbol j}) \subseteq \phi({\boldsymbol i}), \boldsymbol j \in \mathcal{J}_{\boldsymbol i, o}}p(\boldsymbol j)}.
\end{align*}
Hence, with respect to (\ref{ASRdef}) and (\ref{CASRdef}),
\begin{align*}
\boldsymbol \beta_{\phi({\boldsymbol i})} &=   \left\{\begin{array}{ll} \log {CASR} (\phi(\boldsymbol i) \mid \mathcal{F} \mysetminusA\phi(\boldsymbol i) = 0),  &\mbox{ if } \phi(\boldsymbol i) \subsetneq \mathcal{F},\\
& \\
\log {ASR} (\mathcal{F}), & \mbox{ if } \phi(\boldsymbol i) =\mathcal{F}.\\
\end{array}
\right.
\end{align*}

\end{proof}

\begin{remark}
The foregoing developments  are modeled on standard results for CPs and ORs. In that case, $\mathcal{I}$ is defined without the exclusion of the cell $(0, \ldots, 0)$ but definitions (\ref{MatrixCornerGeneralt}) and (\ref{paramt}) remain the same. Theorems \ref{TheoremParamt1} and \ref{TheoremParamt2} also hold, with the exception that in the latter one, instead of $CASR$s, one has $COR$s. For details, see \cite{RudasSpringer1}. In Example \ref{k=3t}, one would have an additional cell: $(0,0,0) \leftrightarrow \{\bfs{\varnothing}\}$, as the first one in the order, and the design matrix, say $\mathbf{T}$,  would have an additional row and an additional column, as compared to (\ref{cornerPar111}):
\begin{equation}\label{cornerPar11}
\mathbf{T} = \left(\begin{array}{cccccccc}
1 & 1 & 1 & 1 & 1 & 1 & 1 & 1\\ 
0 & 1 & 0 & 0 & 1 & 1 & 0 & 1 \\ 
0 & 0 & 1 & 0 & 1 & 0 & 1 & 1 \\ 
0 & 0 & 0 & 1 & 0 & 1 & 1 & 1 \\ 
0 & 0 & 0 & 0 & 1 & 0 & 0 & 1 \\ 
0 & 0 & 0 & 0 & 0 & 1 & 0 & 1 \\ 
0 & 0 & 0 & 0 & 0 & 0 & 1 & 1 \\
0 & 0 & 0 & 0 & 0 & 0 & 0 & 1\\
\end{array}\right)
\end{equation}
The additional row contains  $1$ in every position and corresponds to an additional parameter present in every cell of $\mathcal{I}$, called the overall effect. The additional column is for $p(0,0,0)$ and it only contains the overall effect. 
\hfill{ } \qed
\end{remark}

Exponential families of probability distributions may be parameterized not only with the canonical parameters $\mathbf{\beta}$ in (\ref{MatrixCornerGeneralt}), but also by mean value parameters obtained as 
\begin{equation}\label{mvpt}
\boldsymbol{\mu} = \mathbf{S}\boldsymbol{p}.
\end{equation}

The components of $\mathbf{S}\boldsymbol{p}$, called subset sums, are the sums of probabilities  of the cells characterized by a common subset of features.  In the standard CP case, the subset sums are marginal probabilities. The next example illustrates the structure of the mean value parameters entailed by the design matrix (\ref{cornerPar111}) .

\vspace{5mm}

\noindent\textbf{Example \ref{k=3t}} (revisited):
The components of $\boldsymbol{\mu} = (\mu_1, \mu_2, \dots, \mu_7)\tr$ are

$$\begin{array}{c}
p(1,0,0)+p(1,1,0)+p(1,0,1)+p(1,1,1) \\
p(0,1,0)+p(1,1,0)+p(0,1,1)+p(1,1,1)\\
p(0,0,1)+p(1,0,1)+p(0,1,1)+p(1,1,1)\\
p(1,1,0)+p(1,1,1)\\
p(1,0,1)+p(1,1,1)\\
p(0,1,1)+p(1,1,1)\\
p(1,1,1).\\
\end{array}
$$
Here, for example, $\mu_1$ is the sum of probabilities of the cells in which the feature $A$ is present, $\mu_6$ is the sum of probabilities of the cells in which the features $B$ and $C$ are both present.
\qed

\vspace{5mm}

As was seen in Theorem \ref{TheoremParamt1}, $\mathbf{S}$ is invertible, and thus, (\ref{mvpt}) is a parameterization, as
\begin{equation}\label{linequt}
\boldsymbol{p} = \mathbf{S}^{-1} \boldsymbol{\mu}.
\end{equation}
The range of feasible mean value parameters is obtained by imposing the normalization condition: $\boldsymbol 1\tr \mathbf{S}^{-1} \boldsymbol{\mu} = 1$ and the inequality constraints: $0 < \mathbf{S}^{-1} \boldsymbol{\mu} <1$.

As the next step, a mixed parameterization \citep[cf.][]{BrownBook, HJ2} of the exponential family of probability distributions  on the IP $\mathcal{I}$, defined in (\ref{paramt}), is constructed.

\section{Mixed  parameterizations generated by AS parameters}\label{SectionMixedPar}

Let $\mathcal{V} = 2^{\mathcal{F}}$ be the power set of $\mathcal{F}$. A class of subsets $\mathcal{F}_{asc} \subseteq {\mathcal{V}}$ is called ascending if for any $S \in \mathcal{F}_{asc}$ and any $T \in {\mathcal{V}}$, if $S \subseteq T$ then $T \in \mathcal{F}_{asc}$. A class of subsets is called descending if its complement with respect to $\mathcal{V}$ is an ascending class. Notice that a descending class is uniquely specified by its maximal (with respect to inclusion) elements, while  an ascending class is uniquely determined by its minimal elements  \citep{RudasSpringer1}.

Let $\mathcal{F}_{asc}$ be an ascending class of $\mathcal{F}$, and let $\mathcal{F}_{des} = \mathcal{V} \mysetminusA \mathcal{F}_{asc} \mysetminusA \{\bfs{\varnothing}\}$ be the complement descending class, excluding the empty set. Although, formally, $\mathcal{F}_{des}$ is not a descending class, it will be called like this for the simplicity of presentation. Consider the matrix $\mathbf{S}$ with entries (\ref{MatrixCornerGeneralt}) and its inverse $\mathbf{S}^{-1}$.
Denote by $\mathbf{A}$ the submatrix of $\mathbf{S}$, consisting of the rows corresponding to the  subsets of $\mathcal{F}$ comprising $\mathcal{F}_{des}$, and by  $\mathbf{D}$ the matrix formed by the {\textit{columns} of $\mathbf{S}^{-1}$ corresponding to the elements $\mathcal{F}_{asc}$}\footnote{The apparent reversal in the usage of the letters of A and D is introduced here to be in line with previous literature which will be referred to later on.}. Because $\mathbf{S} \cdot \mathbf{S}^{-1}$ is an identity matrix, it holds that $\mathbf{A}\mathbf{D}=\mathbf 0$, and thus, the columns of  $\mathbf{D}$ form a basis of the  kernel of  $\mathbf{A}$.

As will be shown in Theorem \ref{TwoDistr1}\textit{b},  the mapping
\begin{equation}\label{expandMix}
\boldsymbol p \longmapsto \left(\begin{array}{c} \mathbf{A} \boldsymbol{p} \\ \mathbf{D}' \log \boldsymbol{p}\end{array}\right)
\end{equation}
is a bijection, and therefore, the mean value parameters $\mathbf{A} \boldsymbol{p}$ and the canonical parameters $\mathbf{D}' \log\boldsymbol{p}$ create a (mixed) parameterization of the set of all positive distributions on the IP $\mathcal{I}$.

\vspace{5mm}

\noindent \textbf{Example \ref{k=3t}} (revisited):
Let $\mathbf{S}$ be the matrix in (\ref{cornerPar111}) and consider the descending class $\mathcal{F}_{des} = \{\{A\}, \{B\}, \{C\}\}$. It can be easily shown, that
\begin{equation}\label{inv}
\mathbf{S}^{-1}=\left(\begin{array}{ccccccc}
1 & 0 & 0 & -1 & -1 & 0 & 1 \\ 
0 & 1 & 0 & -1 & 0 & -1 & 1 \\ 
0 & 0 & 1 & 0 & -1 & -1 & 1 \\ 
0 & 0 & 0 & 1 & 0 & 0 & -1 \\ 
0 & 0 & 0 & 0 & 1 & 0 & -1 \\ 
0 & 0 & 0 & 0 & 0 & 1 & -1 \\ 
0 & 0 & 0 & 0 & 0 & 0 & 1 \\
\end{array}\right).
\end{equation}
The corresponding mixed parameterization is specified by the mean value parameters $\mathbf{A}\boldsymbol p$, and the canonical parameters $\mathbf{D}\tr \log  \boldsymbol p$, where
$$\mathbf{A} = \left(\begin{array}{ccccccc} 
1 & 0 & 0 & 1 & 1 & 0 & 1 \\ 
0 & 1 & 0 & 1 & 0 & 1 & 1 \\ 
0 & 0 & 1 & 0 & 1 & 1 & 1 \\ 
\end{array}\right), \qquad \mathbf{D}\tr = \left(\begin{array}{rrrrrrr}-1 & -1 & 0 & 1 & 0 & 0 & 0 \\ 
-1 & 0 & -1 & 0 & 1 & 0 & 0 \\ 
 0 & -1 & -1 & 0 & 0 & 1 & 0 \\ 
 1 & 1 & 1 & -1 & -1 & -1 & 1 \\
 \end{array}\right).$$
Here $\mathbf{A}$ is formed by the first three rows of $\mathbf{S}$, and $\mathbf{D}$ is the last four columns of $\mathbf{S}^{-1}$.   \qed

\begin{remark}
In the standard development of exponential families for the CP, the mean value and the canonical parameters entailed by a design matrix and its kernel basis, respectively, are variation independent. This is ensured by the (conventionally assumed) presence of the overall effect in the design matrix, which implies that the sum of the cell probabilities is one from among the mean value parameters, up to a reparameterization.  Under this assumption, any vector with the same subset sums as the given probability distribution is  a probability distribution as well.  However, if the overall effect is not present, a vector, whose subset sums are equal to those of the given distribution, does not necessarily sum to $1$, and moreover, there may be probability distributions whose subset sums are proportional to each other, as illustrated next:
\hfill{} \qed
\end{remark}

\vspace{5mm}

\noindent\textbf{Example \ref{k=2t}} (revisited):
Consider the mixed parameterization entailed by the first two rows of the matrix $\mathbf{S}$ given in (\ref{AS2matrices}): 
$$\mathbf{A} = \left(\begin{array}{ccc} 1 & 0& 1\\
                                                     0 & 1 & 1\\
                                \end{array}\right), \qquad 
\mathbf{D} = \left(\begin{array}{ccc} -1 & -1& 1\\                                                                                                                        \end{array}\right).$$
The row $(1,1,1)$ is not in row space of $\mathbf{A}$, so the overall effect is not present. The subset sums, $\mathbf{A}\boldsymbol p$, of the two probability distributions, say $\boldsymbol p_1, \, \boldsymbol p_2$, in Table \ref{T1t} are proportional to each other, $3\mathbf{A}\boldsymbol p_1 = 4\mathbf{A}\boldsymbol p_2$.  \qed
\vspace{5mm}

Therefore, because the overall effect is not present, the mean value and canonical parameters in (\ref{expandMix}) are not variation independent on the IP  in the conventional sense. To formulate and prove a related result in the case of IPs, the mean value parameters defined in (\ref{mvpt}) may be considered in the following extended form:
\begin{equation}\label{ExtPar}
\boldsymbol{\nu}'=(\nu_1, \,\, \boldsymbol{\nu}_2'), \,\, \mbox{such that } \,\,  \boldsymbol{\mu} = \nu_1 \boldsymbol{\nu}_2.
\end{equation}

The extended form makes it possible to consider distributions with proportional (but not necessarily equal) subset sums, as this will be needed in the sequel. Obviously, the components of $\boldsymbol{\nu}$ are not identified and values for which $\nu_1 \boldsymbol{\nu}_2$ 
are the same, determine the same $\boldsymbol{p}$. The mixed parameterization uses only some of the components of $\boldsymbol{\mu}$, those obtained from the rows of $\mathbf{A}$ belonging to the descending class. For example, for the distributions in Table \ref{T1t},  the $\boldsymbol{\nu}_2$ component of the extended mean value parameter may be selected as $(3,4)'$ for both $\boldsymbol p_1$ and $\boldsymbol p_2$, so in this case, $\nu_1$ is $4/20$ for $\boldsymbol p_1$, and $3/20$ for $\boldsymbol p_2$.

\begin{table}
\centering
\begin{tabular}[t]{ccc}
& $F_2=0$& $F_2=1$\\
\hline \\ [-6pt]
$F_1=0$	& - & 1/5\\
$F_1=1$ 	& 2/5  &  2/5 \\[6pt]
\end{tabular}
\begin{tabular}[t]{ccc}
& $F_2=0$& $F_2=1$\\
\hline \\ [-6pt]
$F_1=0$	& - & 8/20\\
$F_1=1$ 	&11/20  &  1/20 \\[6pt]
\end{tabular}
\caption{Two probability distributions with proportional subset sums}
\label{T1t}
\end{table}

The next theorem shows that the two components of the mixed parameterization are variation independent up to a constant multiplier. In other words, the canonical parameters and the extended mean value parameters are variation independent. 


\begin{theorem}\label{TwoDistr1}
Let $\boldsymbol q, \boldsymbol r$ be two positive probability distributions on the IP $\mathcal{I}$. There exists  a unique probability distribution $\boldsymbol p$ on $\mathcal{I}$, such that
\begin{enumerate}[(a)]
\item $\mathbf{A} \boldsymbol p = \gamma \mathcal{A} \boldsymbol q$ for some $\gamma > 0$, and  $\mathbf{D} \log \bd p = \mathbf{D}\log  \bd r$;  
\item If $\boldsymbol q= \boldsymbol r$, then $\boldsymbol p = \boldsymbol q = \boldsymbol r$;
\item $\gamma = 1$ if and only if $\boldsymbol p \equiv \boldsymbol q$.
\end{enumerate}
\end{theorem}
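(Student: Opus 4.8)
The plan is to reduce both requirements in (a) to the log-linear representation attached to $\mathbf A$, and then to settle the existence and uniqueness of the multiplier $\gamma$ by a one-dimensional monotonicity argument along the ray $\gamma\mapsto\gamma\,\mathbf A\boldsymbol q$. First I would rewrite the canonical constraint: since the columns of $\mathbf D$ form a basis of $\ker\mathbf A$, the condition $\mathbf D\tr\log\boldsymbol p=\mathbf D\tr\log\boldsymbol r$ is equivalent to $\log\boldsymbol p-\log\boldsymbol r\in(\ker\mathbf A)^{\perp}=\operatorname{rowspace}(\mathbf A)$. Hence every admissible $\boldsymbol p$ has the form $p(\boldsymbol i;\boldsymbol\theta)=r(\boldsymbol i)\exp\{(\mathbf A\tr\boldsymbol\theta)_{\boldsymbol i}\}$ for some $\boldsymbol\theta$, and, because $\mathbf A\mathbf D=\mathbf 0$, this family meets the canonical constraint for \emph{every} $\boldsymbol\theta$. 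What remains is to choose $\boldsymbol\theta$ and $\gamma>0$ so that $\mathbf A\boldsymbol p(\boldsymbol\theta)=\gamma\,\mathbf A\boldsymbol q$ and $\boldsymbol 1\tr\boldsymbol p(\boldsymbol\theta)=1$.

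The engine is the partition function $K(\boldsymbol\theta)=\boldsymbol 1\tr\boldsymbol p(\boldsymbol\theta)$, for which $\nabla K(\boldsymbol\theta)=\mathbf A\boldsymbol p(\boldsymbol\theta)$ and $\nabla^2 K(\boldsymbol\theta)=\mathbf A\operatorname{diag}(\boldsymbol p(\boldsymbol\theta))\mathbf A\tr$. The Hessian is positive definite, since $\boldsymbol p(\boldsymbol\theta)>0$ and $\mathbf A$ has full row rank (its rows are rows of the invertible $\mathbf S$); so $K$ is strictly convex and, by the standard theory of exponential families \citep[cf.][]{BrownBook}, the gradient map $\boldsymbol\theta\mapsto\mathbf A\boldsymbol p(\boldsymbol\theta)$ is a bijection of $\mathbb R^{|\mathcal F_{des}|}$ onto the open cone $C=\{\mathbf A\boldsymbol v:\boldsymbol v>0\}$. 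As $\boldsymbol q>0$ we have $\mathbf A\boldsymbol q\in C$, and since $C$ is a cone, $\gamma\,\mathbf A\boldsymbol q\in C$ for all $\gamma>0$; thus each $\gamma>0$ produces a unique $\boldsymbol\theta(\gamma)$ with $\mathbf A\boldsymbol p(\boldsymbol\theta(\gamma))=\gamma\,\mathbf A\boldsymbol q$. It is precisely that the reachable set is a cone through the origin, rather than a translate of it, that forces the weakening from equal to proportional subset sums.

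It then remains to pin $\gamma$ down by normalization, i.e.\ to solve $g(\gamma):=K(\boldsymbol\theta(\gamma))=1$. Differentiating $\nabla K(\boldsymbol\theta(\gamma))=\gamma\,\mathbf A\boldsymbol q$ yields $g'(\gamma)=\gamma\,(\mathbf A\boldsymbol q)\tr[\nabla^2 K]^{-1}(\mathbf A\boldsymbol q)>0$, so $g$ is continuous and strictly increasing. For the endpoints I would exploit the $0/1$ structure of $\mathbf A$: each subset sum obeys $(\mathbf A\boldsymbol p)_t\le\boldsymbol 1\tr\boldsymbol p=K$, whence $K\ge\gamma\max_t(\mathbf A\boldsymbol q)_t\to\infty$ as $\gamma\to\infty$; while as $\gamma\to0^{+}$ every cell above some element of $\mathcal F_{des}$ has $p(\boldsymbol i)\le(\mathbf A\boldsymbol p)_t\to0$, and every remaining cell has a zero column in $\mathbf A$, hence probability frozen at $r(\boldsymbol i)$, so $g(\gamma)\to\sum_{\text{uncovered}}r(\boldsymbol i)<1$ (strict, since $\mathcal F_{des}\neq\varnothing$ leaves at least one covered cell carrying positive $r$-mass). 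By the intermediate value theorem there is a unique $\gamma^{\ast}>0$ with $g(\gamma^{\ast})=1$, and $\boldsymbol p:=\boldsymbol p(\boldsymbol\theta(\gamma^{\ast}))$ is the unique distribution required in (a). Part (b) then follows by substituting $\boldsymbol\theta=\boldsymbol 0,\ \gamma=1$ when $\boldsymbol q=\boldsymbol r$ (so that $K(\boldsymbol 0)=\boldsymbol 1\tr\boldsymbol r=1$) and appealing to this uniqueness; part (c) reduces to the defining identity $\mathbf A\boldsymbol p=\gamma\,\mathbf A\boldsymbol q$, which, since $\mathbf A\boldsymbol q>0$, makes $\gamma=1$ equivalent to $\boldsymbol p$ reproducing the subset sums of $\boldsymbol q$.

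The step I expect to be the main obstacle is the endpoint analysis as $\gamma\to0^{+}$. Precisely because the design matrix carries no overall effect, $K$ is not homogeneous and $g$ need not tend to $0$, so the entire existence argument hinges on showing that the frozen mass on the cells uncovered by $\mathcal F_{des}$ is strictly below $1$. Were an overall effect present, $\boldsymbol 1\in\operatorname{rowspace}(\mathbf A)$ would make $g$ positively homogeneous and the crossing of the level $1$ automatic; it is exactly its absence that both necessitates the endpoint estimate and produces the nontrivial constant multiplier $\gamma$.
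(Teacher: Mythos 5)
Your argument is correct in substance but takes a genuinely different route from the paper's. The paper proves part (a) constructively: $\boldsymbol p$ is obtained as the limit of the generalized iterative proportional fitting procedure, alternating a Bregman relaxation step (initialized at $\boldsymbol r$, which preserves $\mathbf{D}\tr\log\boldsymbol p=\mathbf{D}\tr\log\boldsymbol r$ while driving $\mathbf{A}\boldsymbol\pi$ to $\gamma\mathbf{A}\boldsymbol q$) with an update of $\gamma$ that restores $\boldsymbol 1\tr\boldsymbol p=1$; convergence and uniqueness are delegated to the cited G-IPF results. You instead describe the fiber of the canonical constraint explicitly as the exponential family $p(\boldsymbol i;\boldsymbol\theta)=r(\boldsymbol i)\exp\{(\mathbf{A}\tr\boldsymbol\theta)_{\boldsymbol i}\}$ through $\boldsymbol r$, reduce the matching of subset sums to inverting the gradient of the strictly convex partition function $K$, and then pin down $\gamma$ by a one-dimensional monotone-crossing argument, with an explicit formula showing $g'(\gamma)>0$ and a genuine endpoint analysis. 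Your route is self-contained and makes visible exactly where the absence of the overall effect enters ($K$ is not homogeneous, $g(0^{+})=\sum_{\text{uncovered}}r(\boldsymbol i)<1$ while $g(\gamma)\to\infty$); the paper's route yields an implementable algorithm and reuses existing convergence theory. Parts (b) and (c) are handled the same way in both proofs, by appeal to the uniqueness established in (a).

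Two caveats. First, the surjectivity of $\boldsymbol\theta\mapsto\mathbf{A}\boldsymbol p(\boldsymbol\theta)$ onto the open cone $\{\mathbf{A}\boldsymbol v:\boldsymbol v>0\}$ is the one nontrivial external input in your argument; Brown's monograph treats normalized families, so the appropriate reference is the non-homogeneous (Birch-type) existence theory for relational models without the overall effect in Klimova, Rudas and Dobra --- which is essentially the same machinery the paper hides inside the G-IPF convergence proof, so the two arguments ultimately rest on the same foundation. Second, your treatment of (c) establishes $\gamma=1$ if and only if $\mathbf{A}\boldsymbol p=\mathbf{A}\boldsymbol q$; this gives the implication from $\boldsymbol p=\boldsymbol q$ to $\gamma=1$, but not the converse as literally stated. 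Indeed, if $\dim Ker(\mathbf{A})\geq 2$ one may take $\boldsymbol r=\boldsymbol q+t\boldsymbol d$ with $\boldsymbol 0\neq\boldsymbol d\in Ker(\mathbf{A})$, $\boldsymbol 1\tr\boldsymbol d=0$ and $t$ small, and then the unique solution is $\boldsymbol p=\boldsymbol r\neq\boldsymbol q$ with $\gamma=1$. So the converse direction of (c) requires either extra hypotheses or the weaker reading ``equal subset sums''; the paper's own proof declares (c) obvious and does not resolve this either, so this is a defect of the statement rather than of your argument specifically, but it is worth flagging rather than passing over.
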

As it will be shown in the proof, the probability distribution $\boldsymbol p$ described above can be obtained as the pointwise limit of a sequence of frequency distributions derived using an iterative scaling procedure. 

\begin{proof}
\textit{(a)} The argument proceeds by construction, in which the distribution $\boldsymbol p$ is obtained as the limiting point of an iterative process consisting of two steps that are repeated until convergence. To begin, set $\gamma = 1$.  Then, for a current value of $\gamma$ iterate two steps:
\vspace{0.8mm}

\noindent \textit{Step 1}: Find the relaxation sequence of vectors that initiates from $\boldsymbol r$ and converges to a solution, say $\boldsymbol p_{\gamma}$, of $\mathbf{A}\boldsymbol \pi = \gamma \mathbf{A} \boldsymbol q$,  see \cite{Bregman} and \cite{KRipf1}.\\
\noindent \textit{Step 2}: If $\boldsymbol 1\tr\boldsymbol p_{\gamma} \neq 1$, update $\gamma$ using one of the methods described in \cite{KRipf1} for the algorithm G-IPF,  and go to \textit{Step 1}. \\
\vspace{0.4mm}

 It can be proven by induction that, for every feasible $\gamma > 0$, the limit of a relaxation sequence satisfies $\mathbf{D} \log \boldsymbol{p}_{\gamma} = \mathbf{D} \log \boldsymbol r$. The two steps above implement the G-IPF algorithm in \cite{KRipf1}, except for the choice of the initial point for the relaxation sequences. Because the proof of convergence of the original G-IPF does not rely on this choice, it applies to the current implementation and can be used to show that the iterative procedure described above converges to the unique probability distribution $\boldsymbol p^*$ that satisfies: $\mathbf{A} \boldsymbol p^* = \gamma^* \mathbf{A} \boldsymbol q$,  $\mathbf{D} \log \boldsymbol{p}^* = \mathbf{D} \log \boldsymbol r$, and $\boldsymbol 1\tr\boldsymbol p^* = 1$, for a unique $\gamma^* > 0$.

To prove \textit{(b)}, suppose $\boldsymbol q= \boldsymbol r$ and apply G-IPF. The algorithm converges to a unique limit, say $\tilde{\boldsymbol p}$, that satisfies: $\mathbf{A} \tilde{\boldsymbol p} =\tilde{\gamma} \mathbf{A} \boldsymbol q$, $\mathbf{D} \log \tilde{\boldsymbol p} = \mathbf{D} \log \boldsymbol q$, $\boldsymbol 1\tr \tilde{\boldsymbol p} = 1$, for a unique $\tilde{\gamma}$.  Because $\boldsymbol q$ already satisfies these equations with $\tilde{\gamma} = 1$, the uniqueness implies that $\tilde{\boldsymbol p} \equiv \boldsymbol q$.

The statement in \textit{(c)} is now obvious. 
\end{proof}

In the following section, a new model class on the IP which is specified by vanishing AS interactions is introduced, and its relationship with hierarchical log-linear models on the CP and their quasi variants on the IP is established.

\section{Hierarchical  AS and quasi log-linear models for the IP}\label{SectionNoOE}

Hierarchical AS (HAS) models on the IP are defined in the same manner as hierarchical log-linear models on the CP, except that the definition relies on conditional AS ratios instead of conditional odds ratios. 

\begin{definition}\label{HASmodel}
Let $\mathcal{F}_{asc}$ be an ascending class of subsets of $\mathcal{F}$. The hierarchical AS model generated by $\mathcal{F}_{asc}$ is the set of positive probability  distributions  on the IP $\mathcal{I}$ for which
\begin{eqnarray}\label{HASdef}
{CASR}(V | \mathcal{F} \mysetminusA V = 0 ) &=& {COR}(V | \mathcal{F} \mysetminusA V = 1 ) = 1, \quad \mbox{for all } V \in \mathcal{F}_{asc} \mysetminusA \mathcal{F}, \\ 
{ASR}(\mathcal{F}) &=& 1. \nonumber
\end{eqnarray}
\end{definition}
\vspace{1mm}

By Theorem \ref{TheoremParamt2}, the constraint in (\ref{HASdef}) is equivalent to requiring that:
\begin{equation}\label{HASbeta}
\beta_{\phi(\boldsymbol{i})} = 0 \makebox{ for all } \phi(\boldsymbol{i}) \in \mathcal{F}_{asc},
\end{equation}
and thus for distributions $\boldsymbol p$ in the model,
\begin{equation}\label{HASLLform}
\log p(\boldsymbol i ) = \sum_{\boldsymbol j: \phi(\boldsymbol j)\subseteq \phi(\boldsymbol i), \phi(\boldsymbol{j}) \notin \mathcal{F}_{asc}}  \beta_{\phi(\boldsymbol j)}.
\end{equation}
Further, the model (\ref{HASLLform}) can also be written in the matrix form:
\begin{equation}\label{HASmatrixform}
\log \boldsymbol p = \mathbf{A}\tr \boldsymbol \beta,
\end{equation}
where $\mathbf{A}$ is the matrix, whose rows are indicators of the subsets of $\mathcal{F}$ which comprise the descending class $\mathcal{F}_{des} = \mathcal{V} \mysetminusA \mathcal{F}_{asc} \mysetminusA \{\varnothing\}$, and $\boldsymbol \beta$ is a vector of  log-linear parameters which may be associated with those subsets. 
Naturally, $\mathbf{A}$ is a submatrix of $\mathbf{S}$ with entries given in (\ref{MatrixCornerGeneralt}). 

As follows from (\ref{HASmatrixform}), $HAS(\mathcal{F}_{asc})$ is a relational model generated by the design matrix $\mathbf{A}$ \citep{KRD11}. In the mixed parameterization constructed in Section \ref{SectionMixedPar} , $HAS(\mathcal{F}_{asc})$ can be specified by setting the canonical parameters corresponding to the kernel basis matrix $\mathbf{D}$ equal to zero. The distributions in the models are then parameterized by extended mean value parameters (\ref{ExtPar}).
	
The next example describes some HAS models  for three features.   
	
\vspace{5mm}

\noindent \textbf{Example \ref{k=3t}} (revisited):
Consider the HAS model generated by the ascending class $\mathcal{F}_{asc}= \{\{AB\}, \{ABC\}\}$. The design matrix is obtained from (\ref{cornerPar111}) by omitting the 4th and the 7th rows, corresponding to the $\{AB\}$ and $\{ABC\}$ subsets respectively:
\begin{equation}\label{modMxCIAS}
\mathbf{S}_{[AC][BC]} = \left(\begin{array}{ccccccc}
1&0&0&1&1&0&1\\
0&1&0&1&0&1&1\\
0&0&1&0&1&1&1\\
0&0&0&0&1&0&1\\
0&0&0&0&0&1&1\\
\end{array} \right).
\end{equation}
Here the notation $\mathbf{S}_{[AC][BC]}$ follows the convention for  hierarchical log-linear models, which are denoted  using the maximal elements of the generating descending class. 
The kernel basis matrix of $\mathbf{S}_{[AC][BC]}$ is
\begin{equation*}
\mathbf{D}\tr_{[AC][BC]} = \left(\begin{array}{rrrrrrr}
-1& -1& 0& 1& 0& 0&0 \\
0&0&1&0&-1&-1&1\\
\end{array} \right).
\end{equation*}
Equivalently, the model can be specified in terms of one non-homogeneous and one homogeneous odds ratio:
\begin{equation}\label{AScondOR}
\frac{p_{110}}{p_{100}p_{010}} = 1, \qquad  \frac{p_{001}p_{111}}{p_{101}p_{011}} =1.
\end{equation}
The first generalized odds ratio is non-homogeneous, and expresses the AS-independence of $A$ and $B$, given $C = 0$.  The second odds ratio expresses the  independence of $A$ and $B$, given $C =1$. The model can, therefore, be interpreted as a model of conditional AS independence. The model of assuming only the second restriction in  (\ref{AScondOR}) is usually referred to as context specific independence,  see, e.g., \cite{HosgaardCSImodels} and \cite*{NymanCSI}. It is clear from the definition, that, because the vector of $1$'s is not in the row space of $\mathbf{A}$,  the HAS conditional independence is a relational model without the overall effect. 

As the second example, consider the HAS model generated by the ascending class $\mathcal{F}_{asc}= \{\{ABC\}\}$, which can be interpreted as  the model of ``no second order AS interaction''. The model matrix is equal to 
\begin{equation}\label{modMxHomAss}
\mathbf{S}_{[AB][AC][BC]}=\left(\begin{array}{ccccccc}
1 & 0 & 0 & 1 & 1 & 0 & 1 \\ 
0 & 1 & 0 & 1 & 0 & 1 & 1 \\ 
0 & 0 & 1 & 0 & 1 & 1 & 1 \\ 
0 & 0 & 0 & 1 & 0 & 0 & 1 \\ 
0 & 0 & 0 & 0 & 1 & 0 & 1 \\ 
0 & 0 & 0 & 0 & 0 & 1 & 1 \\
\end{array}\right),
\end{equation}
and a kernel basis matrix of $\mathbf{S}_{[AB][AC][BC]}$ is 
\begin{equation}\label{kernelMHAS}
\mathbf{D}\tr_{[AB][AC][BC]} = \left(\begin{array}{rrrrrrr}
-1 & -1& -1&  1& 1& 1 & -1\\
\end{array} \right).
\end{equation}
In terms of the generalized odds ratios, the model can be expressed as:
$$
\frac{p_{110}p_{010}p_{001}}{p_{100} p_{010} p_{001}p_{111} } =1.
$$
As the latter can be rewritten as any of the three equations:
$$\frac{p_{110}}{p_{100}p_{010}} = \frac{p_{111}p_{001}}{p_{101}p_{011}}, \quad  \frac{p_{101}}{p_{100}p_{001}} = \frac{p_{111}p_{010}}{p_{110}p_{011}}, \quad \mbox{ or } \quad \frac{p_{011}}{p_{010}p_{001}} = \frac{p_{111}p_{100}}{p_{101}p_{110}},$$
one sees that, under this HAS model:
\begin{eqnarray*}
{CASR}(A,B|C=0) & ={COR}(A,B|C=1),\\
{CASR}(A,C|B=0) &={COR}(A,C|B=1),\\
{CASR}(B,C|A=0) &={COR}(B, C|A=1).
\end{eqnarray*}
This model is a relational model which can be interpreted as a model of homogeneous AS association. Just as for the conditional independence model above, the overall effect is not present. 
\qed

\vspace{7mm}

\begin{theorem}\label{theoremNoOE}
Let $\mathcal{F}_{asc} \subseteq \mathcal{V}$ be an ascending class of subsets of $\mathcal{F}$. 
If  $\mathcal{F}_{asc}\neq \{\bfs{\varnothing}\}$, the $HAS(\mathcal{F}_{asc})$ model  is a relational model without the overall effect. 
\end{theorem}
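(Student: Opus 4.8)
The plan is to reduce the statement to a single fact about the design matrix, the same one verified by hand in the two examples preceding the theorem: the model $HAS(\mathcal{F}_{asc})$ contains the overall effect if and only if the all-ones vector $\boldsymbol 1=(1,\dots,1)\tr\in\mathbb{R}^{I}$ lies in the row space of its design matrix $\mathbf{A}$, whose rows are the rows of $\mathbf{S}$ indexed by the descending class $\mathcal{F}_{des}=\mathcal{V}\mysetminusA\mathcal{F}_{asc}\mysetminusA\{\bfs{\varnothing}\}$ (see (\ref{HASmatrixform})). So it is enough to show that $\boldsymbol 1$ is \emph{not} in the row space of $\mathbf{A}$ whenever $\mathcal{F}_{asc}$ is nonempty.

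First I would use that $\mathbf{S}$ is invertible (Theorem \ref{TheoremParamt1}), so its rows $\{\mathbf{s}_{\boldsymbol i}\}_{\boldsymbol i\in\mathcal{I}}$ form a basis of $\mathbb{R}^{I}$ and $\boldsymbol 1$ has a \emph{unique} expansion $\boldsymbol 1=\sum_{\boldsymbol i\in\mathcal{I}}c_{\boldsymbol i}\,\mathbf{s}_{\boldsymbol i}$, whose coefficient vector solves $\mathbf{S}\tr\mathbf{c}=\boldsymbol 1$, that is $\mathbf{c}=(\mathbf{S}\tr)^{-1}\boldsymbol 1$. The heart of the argument is that $(\mathbf{S}\tr)^{-1}$ is available in closed form from (\ref{invent}), so each coefficient is a signed subset sum $c_{\boldsymbol i}=\sum_{\boldsymbol j:\,\phi(\boldsymbol j)\subseteq\phi(\boldsymbol i)}(-1)^{|\phi(\boldsymbol i)\mysetminusA\phi(\boldsymbol j)|}$. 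Writing $I=\phi(\boldsymbol i)$ and using the alternating-sum identity $\sum_{J'\subseteq I}(-1)^{|I|-|J'|}=0$ for $I\neq\varnothing$, together with the fact that $\boldsymbol j$ ranges only over the \emph{nonempty} subsets of $I$ (so the single term $J'=\varnothing$, equal to $(-1)^{|I|}$, is missing), one obtains $c_{\boldsymbol i}=(-1)^{|\phi(\boldsymbol i)|+1}$. The decisive point is that every coefficient equals $\pm1$, hence is nonzero.

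I would then conclude by uniqueness of the basis representation. If $\boldsymbol 1$ belonged to $\mathrm{span}\{\mathbf{s}_{\boldsymbol i}:\boldsymbol i\in\mathcal{F}_{des}\}$, then padding such a combination with zero coefficients on $\mathcal{F}_{asc}$ would give an expansion of $\boldsymbol 1$ in the full row basis; by uniqueness it must coincide with $\mathbf{c}$, forcing $c_{\boldsymbol i}=0$ for every $\boldsymbol i\in\mathcal{F}_{asc}$. Since all coefficients are nonzero, this is impossible as soon as $\mathcal{F}_{asc}$ contains a single subset. Hence, for nonempty $\mathcal{F}_{asc}$, the all-ones vector is not in the row space of $\mathbf{A}$ and $HAS(\mathcal{F}_{asc})$ has no overall effect; this is precisely the general form of the assertion ``$(1,1,1)$ is not in the row space of $\mathbf{A}$'' checked for the two examples above.

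I do not expect a deep obstacle: once (\ref{invent}) is invoked the computation is routine, and the substantive content is only that excluding the zero cell deletes the $J'=\varnothing$ term and leaves the nonzero residue $(-1)^{|\phi(\boldsymbol i)|+1}$, which is exactly what kills the overall effect. The one point demanding care is the tempting kernel-based alternative: one may argue that the overall effect is present iff $\boldsymbol 1\perp\ker(\mathbf{A})$, i.e. iff $\boldsymbol 1\tr\mathbf{D}=\boldsymbol 0$, but this has to be tested on the canonical kernel basis given by the columns of $\mathbf{S}^{-1}$ indexed by $\mathcal{F}_{asc}$ (whose column sums are again $(-1)^{|\cdot|+1}\neq0$), because a different kernel basis of $\mathbf{A}$ may have individual columns that sum to zero, as the displayed $\mathbf{D}\tr_{[AC][BC]}$ shows. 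The row-space and uniqueness route avoids this subtlety entirely, which is why I would present it.
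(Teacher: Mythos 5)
Your proof is correct, but it follows a genuinely different route from the paper's. The paper does not expand $\boldsymbol 1$ in the row basis of $\mathbf{S}$; instead it exhibits a single explicit vector in $Ker(\mathbf{A})$ — the $\pm 1$ parity vector $\boldsymbol d$ encoding the constraint $\beta_{\mathcal{F}}=\log ASR(\mathcal{F})=0$, which is necessarily imposed because a nonempty ascending class must contain $\mathcal{F}$ itself — and observes that $\boldsymbol 1\tr\boldsymbol d=|\mathcal{I}_{pk}|-|\bar{\mathcal{I}}_{pk}|=2^{k-1}-(2^{k-1}-1)\neq 0$, so $\boldsymbol 1$ cannot lie in the row space of $\mathbf{A}$. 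That quantity is exactly your coefficient $c_{\boldsymbol i}$ for $\phi(\boldsymbol i)=\mathcal{F}$, so in effect the paper computes one coordinate of your vector $\mathbf{c}=(\mathbf{S}\tr)^{-1}\boldsymbol 1$ while you compute all of them via (\ref{invent}) and the alternating-sum identity. What your version buys is generality and a sharper statement: since every $c_{\boldsymbol i}=(-1)^{|\phi(\boldsymbol i)|+1}$ is nonzero, deleting \emph{any} nonempty set of rows of $\mathbf{S}$ destroys the representation of $\boldsymbol 1$, so the ascending-class hypothesis is not actually needed — any relational model obtained by dropping rows of $\mathbf{S}$ lacks the overall effect. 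What the paper's version buys is brevity and a direct statistical reading (the non-homogeneous constraint $ASR(\mathcal{F})=1$ is itself the witness). One small caution on your closing remark: the equivalence ``overall effect present iff $\boldsymbol 1\tr\mathbf{D}=\boldsymbol 0$'' holds for \emph{any} kernel basis $\mathbf{D}$, not only the canonical one; the paper's kernel route is sound precisely because exhibiting one kernel vector with nonzero coordinate sum already precludes $\boldsymbol 1$ from the row space. Your uniqueness argument simply sidesteps having to choose a basis at all.
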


\begin{proof}
Let $\mathbf{A}$ be the model matrix of $HAS(\mathcal{F}_{asc})$  in the corner parameterization. 
Equation (\ref{HASbeta}) implies that under the model, 
 \begin{align*}
\beta_{\mathcal{F}} = \log {ASR}(\mathcal{F}) = \sum_{\boldsymbol j \in \mathcal{J}_{pk}}  \log p(\boldsymbol j)  - \sum_{\boldsymbol j \in \bar{\mathcal{J}}_{pk}}{ \log p(\boldsymbol j)} =0.
\end{align*}
which entails that the vector $\boldsymbol d = (d(\boldsymbol j))_{\boldsymbol j \in \mathcal{I}}$ with components
\begin{align*}
d(\boldsymbol j) = \left\{ \begin{array}{rc} 1, & \mbox{if } \, \boldsymbol j \in \mathcal{J}_{pk}\\
                                                              -1,  & \mbox{if } \, \boldsymbol j \in \bar{\mathcal{J}}_{pk}\\
                                                              \end{array}\right.
\end{align*}
belongs to $Ker(\mathbf{A})$. 

As $|\mathcal{J}_{pk}| \neq |\bar{\mathcal{J}}_{pk}|$, $\boldsymbol 1\tr \boldsymbol d \neq 0$ and thus the row space of $\mathbf{A}$ does not contain $\boldsymbol 1$.
\end{proof}

Hierarchical quasi log-linear models, described next, will be derived from conventional log-linear models on the CP, via conditioning those on the IP. As it will be seen, the quasi models can also be expressed using conditional odds ratios, however some of the cells in the IP will remain unconstrained under a quasi model.

\begin{definition}\label{HASmodel}
Let $\mathcal{F}_{asc}$ be an ascending class of subsets of $\mathcal{F}$, and let $LL(\mathcal{F}_{asc})$ be the log-linear model on the CP $\mathcal{I}_0$ generated by this class. The hierarchical quasi model $QLL(\mathcal{F}_{asc})$  is the following set of probability  distributions on the IP $\mathcal{I}$:  
$$QLL(\mathcal{F}_{asc})=\{\boldsymbol \pi = (\pi(\boldsymbol i))_{\boldsymbol i \in \mathcal{I}} > \boldsymbol 0:  \exists \boldsymbol p \in LL(\mathcal{F}_{asc}), \mbox{ such that } \,\, \pi(\boldsymbol i) = \frac{p(\boldsymbol i)}{\sum_{\boldsymbol i \in \mathcal{I}} p(\boldsymbol i)} \,\forall \,\boldsymbol i \in \mathcal{I}\}.$$
\end{definition}*
As follows from the definition, a  quasi log-linear model always has an overall effect.

The construction of a model matrix for a quasi conditional independence model is shown next.
\vspace{5mm}

\noindent\textbf{Example \ref{k=3t}} (revisited):
Consider the CP obtained from the IP in (\ref{Tab2}) by completing it with the $(0,0,0,)$ cell. The design matrix for the corner parameterization is equal to $\mathbf{T}$ in (\ref{cornerPar11}), see the Remark in Section \ref{SectionASRs}. 
The model matrix for the log-linear model of conditional independence $[AC][BC]$, generated by the ascending class $\mathcal{F}_{asc}= \{\{AB\}, \{ABC\}\}$ is derived by removing from $\mathbf{T}$ rows 5 and 8, corresponding  to the elements of $\mathcal{F}_{asc}$: 
\begin{equation*}\label{modMxCIAS2}
\mathbf{T}_{[AC][BC]} = \left(\begin{array}{cccccccc}
1 & 1 & 1 & 1 & 1 & 1 & 1 & 1\\ 
0 & 1 & 0 & 0 & 1 & 1 & 0 & 1 \\ 
0 & 0 & 1 & 0 & 1 & 0 & 1 & 1 \\ 
0 & 0 & 0 & 1 & 0 & 1 & 1 & 1 \\ 
0 & 0 & 0 & 0 & 0 & 1 & 0 & 1 \\ 
0 & 0 & 0 & 0 & 0 & 0 & 1 & 1 \\
\end{array}\right).
\end{equation*} 
For each distribution in the model, there exists a vector of parameters $\tilde{\boldsymbol \beta} = \tilde{\boldsymbol \beta}(\boldsymbol p)$ = ($\tilde{\beta}_0$, $\tilde{\beta}_1$, \dots,$\tilde{\beta}_5$)$\tr$, such that:
$$\log \boldsymbol p = \mathbf{T}_{[AC][BC]}\tr \tilde{\boldsymbol \beta},$$
or, equivalently, $\boldsymbol p = \exp\{\mathbf{T}_{[AC][BC]}\tr \tilde{\boldsymbol \beta}\}.$ The conditional distribution $\boldsymbol{\pi} = (\pi(\boldsymbol i))_{i \in \mathcal{I}}$ on the IP is obtained as
\begin{align}\label{qCI}
\pi(\boldsymbol i) = \frac{p(\boldsymbol i)}{\sum_{i \in \mathcal{I}} p(\boldsymbol i)},
\end{align}
and therefore,
\begin{align*}
 \pi_1 &= exp\{ \tilde \beta_1\}/Z, \,\, \pi_2 = exp\{ \tilde\beta_2\}/Z, \,\, \pi_3 = exp\{ \tilde \beta_3\} /Z,\\
 \pi_4 &= exp\{ \tilde \beta_1+\tilde\beta_2\}/Z, \,\, \pi_5 = exp\{  \tilde\beta_1 + \tilde\beta_3 + \tilde\beta_4\}/Z,\,\,
\pi_6 = exp\{  \tilde\beta_2 +\tilde \beta_3 + \tilde\beta_5\} /Z,\\
\pi_7 &= exp\{ \tilde \beta_1 + \tilde\beta_2 + \tilde\beta_3 + \tilde\beta_4 + \tilde\beta_5\}/Z 
 \end{align*}  
where $Z = \sum_{i \in \mathcal{I}} p(\boldsymbol i) = exp\{\tilde\beta_1\}   + exp\{\tilde \beta_2\}+\dots +  exp\{ \tilde \beta_1 + \tilde\beta_2 + \tilde \beta_3 + \tilde\beta_4 + \tilde\beta_5\}$ plays the role of the normalization constant. Therefore, for an appropriate parameter $\boldsymbol \gamma = \boldsymbol \gamma(\boldsymbol \pi)$,  
$$\log \boldsymbol \pi = \mathbf{U}_{[AC][BC]} \boldsymbol \gamma,$$
where the design matrix is equal to
\begin{equation}\label{modQHomAS3}
\mathbf{U}_{[AC][BC]} = \left(\begin{array}{ccccccc}
1 & 1 & 1 & 1 & 1 & 1 & 1\\
1 & 0 & 0 & 1 & 1 & 0 & 1\\ 
0 & 1 & 0 & 1 & 0 & 1 & 1 \\ 
0 & 0 & 1 & 0 & 1 & 1 & 1 \\ 
0 & 0 & 0 & 0 & 1 & 0 & 1 \\ 
0 & 0 & 0 & 0 & 0 & 1 & 1 \\
\end{array}\right).
\end{equation}
Notice that removing the first row from $\mathbf{U}_{[AC][BC]} $ leads to the design matrix  $\mathbf{S}_{[AC][BC]}$  given in (\ref{modMxCIAS}), meaning that the quasi model is obtained from the HAS model by including the overall effect. In fact,  the conditional distribution (\ref{qCI}) can also be expressed as:      
$$\boldsymbol \pi = \frac{\exp\{\mathbf{S}_{[AC][BC]}\tr\boldsymbol \beta\}}{ \boldsymbol 1\tr \exp\{\mathbf{S}_{[AC][BC]}\tr\boldsymbol \beta\}}.$$
 
A kernel basis matrix of the quasi $[AC][BC]$ model is equal to 
\begin{equation}\label{kernelCI_OE}
\left(\begin{array}{cccccccc}
0 & 0& 1&  0& -1 &-1 & 1\\
\end{array} \right),
\end{equation}
and thus, the model is specified by a single conditional odds ratio:
$${COR}(AB \mid C = 1)= \frac{p_{001}p_{111}}{p_{011}p_{101}} = 1.$$
The conditional odds ratio ${COR}(AB \mid C = 0)$ is not restricted by the quasi model, so this model characterizes the joint action of $A$ and $B$ when $C$ is present, but provides no restriction otherwise.  The model has one less degree of freedom than the conventional log-linear model $[AC][BC]$ on the CP or the AS conditional independence on the IP.   
\vspace{2mm}

Following the same approach, one can construct a model matrix for the quasi model of homogeneous association, $[AB][AC][BC]$, generated by the ascending class $\mathcal{F}_{asc} = \{\{ABC\}\}$, based on the matrix $\mathbf{S}_{[AB][AC][BC]}$ in (\ref{modMxHomAss}):
\begin{equation}\label{modQHomAS3}
\mathbf{U}_{[AB][AC][BC]} = \left(\begin{array}{ccccccc}
1 & 1 & 1 & 1 & 1 & 1 & 1\\
1 & 0 & 0 & 1 & 1 & 0 & 1\\ 
0 & 1 & 0 & 1 & 0 & 1 & 1 \\ 
0 & 0 & 1 & 0 & 1 & 1 & 1 \\ 
0 & 0 & 0 & 1 & 0 & 0 & 1 \\ 
0 & 0 & 0 & 0 & 1 & 0 & 1 \\ 
0 & 0 & 0 & 0 & 0 & 1 & 1 \\
\end{array}\right).
\end{equation}
This matrix is full-rank, and therefore, the quasi model generated by $\mathcal{F}_{asc} = \{\{ABC\}\}$ is the saturated model.  It does not impose any restriction on the parameter space, and is simply a reparameterization of the exponential family using a different design matrix. \qed

\vspace{5mm}

The next theorem summarizes the relationship between the LL, AS, and quasi LL models:

\begin{theorem}\label{allmodels}
Let $\mathcal{F}_{asc} \subseteq \mathcal{V}$ be an ascending class of subsets of $\mathcal{F}$, and $\mathcal{F}_{des} = \mathscr{F} \mysetminusA \mathcal{F}_{asc} \mysetminusA \{\varnothing\}$. Let $\mathbf{A}$ be the $|\mathcal{F}_{des}| \times I$ indicator matrix of the elements of $\mathcal{F}_{des}$, and denote by $\mathbf{A}_1$ and $\mathbf{A}_0$ the matrices obtained as:
$${\mathbf{A}}_1 = \left(\begin{array}{c}  \boldsymbol 1\tr \\
 \mathbf{A} \end{array} \right), \qquad {\mathbf{A}}_0 = \left(\begin{array}{cc} 1 & \boldsymbol 1\tr \\
\boldsymbol 0 & \mathbf{A} \end{array} \right),$$
with respective sizes $(|\mathcal{F}_{des}|+1) \times I$ and $(|\mathcal{F}_{des}|+1) \times (I+1)$, and $\boldsymbol 0$ and $\boldsymbol 1$ being the column vectors of appropriate dimensions.
Then, the following  holds:
\begin{enumerate}[(i)]
\item $HAS(\mathcal{F}_{asc})$ and $QLL(\mathcal{F}_{asc})$ are the relational models on the IP $\mathcal{I}$, generated by the matrices $\mathbf{A}$ and $\mathbf{A}_1$, respectively.
\item $LL(\mathcal{F}_{asc})$ is the relational model on the CP $\mathcal{I}_0$, generated by the matrix  $\mathbf{A}_0$. 
\item Both $HAS(\mathcal{F}_{asc})$ and $LL(\mathcal{F}_{asc})$ models have $|\mathcal{F}_{asc}|$ degrees of freedom. The  number of the degrees of freedom of  quasi model $QLL(\mathcal{F}_{asc})$ is equal to $|\mathcal{F}_{asc}|-1$. 
\end{enumerate}
\end{theorem}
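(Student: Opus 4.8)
The plan is to handle the three parts in the order (i), (ii), (iii), since the degree-of-freedom count in (iii) rests on identifying the three design matrices in (i) and (ii) and computing their ranks. For (i), the $HAS(\mathcal{F}_{asc})$ assertion is already in hand: equation (\ref{HASmatrixform}) exhibits every distribution in the model as $\log\boldsymbol p=\mathbf A\tr\boldsymbol\beta$, which is exactly the defining relation of the relational model generated by $\mathbf A$, so I would simply record this. The substantive work is the $QLL$ assertion. First I would take $\boldsymbol p\in LL(\mathcal F_{asc})$ and write it, via part (ii), as $\log\boldsymbol p=\mathbf A_0\tr\boldsymbol\beta$, the first coordinate of $\boldsymbol\beta$ being the overall effect $\beta_{\varnothing}$. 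Restricting to the cells of $\mathcal I$ deletes the zero-cell column, turning $\mathbf A_0$ into $\mathbf A_1$, and forming the conditional $\pi(\boldsymbol i)=p(\boldsymbol i)/Z$ with $Z=\sum_{\boldsymbol i\in\mathcal I}p(\boldsymbol i)$ merely shifts the overall effect by $-\log Z$; hence $\log\boldsymbol\pi=\mathbf A_1\tr\boldsymbol\gamma$ with $\gamma_{\varnothing}=\beta_{\varnothing}-\log Z$ and the remaining entries of $\boldsymbol\gamma$ equal to those of $\boldsymbol\beta$, giving the inclusion of $QLL(\mathcal F_{asc})$ into the relational model generated by $\mathbf A_1$. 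For the reverse inclusion I would start from any $\boldsymbol\pi$ with $\log\boldsymbol\pi=\mathbf A_1\tr\boldsymbol\gamma$ and $\boldsymbol 1\tr\boldsymbol\pi=1$, extend it to the CP by assigning the zero cell the log-value $\gamma_{\varnothing}$ (the overall effect alone), and renormalize over $\mathcal I_0$; the resulting $\boldsymbol p$ satisfies $\log\boldsymbol p=\mathbf A_0\tr\boldsymbol\gamma$, so it lies in $LL(\mathcal F_{asc})$ and conditions back to $\boldsymbol\pi$ because $\sum_{\boldsymbol i\in\mathcal I}\pi(\boldsymbol i)=1$.

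For (ii), I would invoke the Remark of Section \ref{SectionASRs}: on the CP the corner parameterization uses the design matrix $\mathbf T$, whose extra row is the all-ones overall effect and whose extra (zero-cell) column carries only that overall effect. The model $LL(\mathcal F_{asc})$ is obtained by setting $\beta_{\phi(\boldsymbol i)}=0$ for $\phi(\boldsymbol i)\in\mathcal F_{asc}$, i.e.\ by deleting the rows indexed by $\mathcal F_{asc}$. The surviving rows are the overall effect, which reads $(1,\boldsymbol 1\tr)$, and the rows indexed by $\mathcal F_{des}$, each having a $0$ in the zero-cell column (since $\varnothing$ contains no nonempty subset) followed by the corresponding row of $\mathbf A$. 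This is precisely the block matrix $\mathbf A_0$.

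For (iii), I would reduce everything to ranks through the fact that a relational (log-linear) model on $n$ cells generated by a design matrix of rank $r$ has $n-r$ degrees of freedom, its dimension in the simplex being $r-1$. The three ranks are: $\operatorname{rank}(\mathbf A)=|\mathcal F_{des}|$, because the rows of $\mathbf A$ form a subset of the rows of the invertible matrix $\mathbf S$ (Theorem \ref{TheoremParamt1}); $\operatorname{rank}(\mathbf A_1)=|\mathcal F_{des}|+1$, because the appended all-ones row is independent of the rows of $\mathbf A$ by Theorem \ref{theoremNoOE}, which states that $\boldsymbol 1$ is not in the row space of $\mathbf A$; and $\operatorname{rank}(\mathbf A_0)=|\mathcal F_{des}|+1$, from the block-triangular shape, the leading $1$ in the zero-cell column making the overall-effect row independent of the $\mathcal F_{des}$ rows. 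Using $|\mathcal F_{asc}|+|\mathcal F_{des}|=|\mathcal V \mysetminusA \{\varnothing\}|=2^k-1=I$, the counts become $I-|\mathcal F_{des}|=|\mathcal F_{asc}|$ for $HAS$ (on $I$ cells), $(I+1)-(|\mathcal F_{des}|+1)=|\mathcal F_{asc}|$ for $LL$ (on $I+1$ cells), and $I-(|\mathcal F_{des}|+1)=|\mathcal F_{asc}|-1$ for $QLL$ (on $I$ cells).

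The main obstacle I anticipate is the reverse inclusion in the $QLL$ part of (i): one must verify that conditioning is \emph{surjective} onto the relational model of $\mathbf A_1$, i.e.\ that the extension-to-the-CP step genuinely lands in $LL(\mathcal F_{asc})$ and returns the prescribed $\boldsymbol\pi$, with the normalization constant absorbed cleanly into the overall effect. The remainder is bookkeeping with the block structure and a rank count; the only conceptual point is that the $n-r$ formula applies verbatim to $HAS$ even though the overall effect is absent there, which is safe because the model dimension equals $r-1$ regardless of whether $\boldsymbol 1$ lies in the row space, so the codimension in the simplex is always $n-r$.
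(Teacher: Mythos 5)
Your argument is correct, but it takes a different route from the paper: the paper offers no proof at all, stating only that the claims ``follow from the results obtained in'' \cite{KRD11} and \cite{KRoverEff}. You instead reconstruct everything from first principles inside the paper's own machinery --- reading $\mathbf{A}$ off (\ref{HASmatrixform}), reading $\mathbf{A}_0$ off the corner parameterization $\mathbf{T}$ of the Remark in Section \ref{SectionASRs}, and proving both inclusions for the $QLL$/$\mathbf{A}_1$ identification by tracking how conditioning deletes the zero-cell column and absorbs $\log Z$ into the overall effect (with the extension-and-renormalization step for surjectivity handled cleanly). This buys a self-contained verification and makes visible exactly which earlier results are used where (Theorem \ref{TheoremParamt1} for $\operatorname{rank}\mathbf{A}=|\mathcal{F}_{des}|$, Theorem \ref{theoremNoOE} for $\operatorname{rank}\mathbf{A}_1=|\mathcal{F}_{des}|+1$), at the cost of being longer than the citation the authors chose. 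Two small caveats. First, the one step you assert rather than prove --- that a relational model on $n$ cells with a rank-$r$ design matrix has dimension $r-1$ in the simplex, hence $n-r$ degrees of freedom, \emph{even without the overall effect} --- is not bookkeeping: for curved families the normalization $\boldsymbol 1\tr\boldsymbol p=1$ is a genuine transversality claim, and this is precisely the content that the paper outsources to \cite{KRD11}; you should either cite it or prove it. Second, your rank count for $\mathbf{A}_1$ and the resulting $|\mathcal{F}_{asc}|-1$ formula implicitly require $\mathcal{F}_{asc}\neq\emptyset$ (otherwise $\mathbf{A}=\mathbf{S}$ already contains $\boldsymbol 1$ in its row space and the count degenerates); the theorem statement shares this implicit assumption, but since you invoke Theorem \ref{theoremNoOE} explicitly you should carry its hypothesis along.
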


\noindent The statements follow from the results obtained in \cite{KRD11, KRoverEff}. The theorem also implies that 

\begin{corollary}
The odds ratio representation of $HAS(\mathcal{F}_{asc})$ can be obtained from the one for $LL(\mathcal{F}_{asc})$, by taking out the probability of the zero cell $p(\boldsymbol 0)$. 
The odds ratio representation of $QLL(\mathcal{F}_{asc})$ can be obtained from one for $LL(\mathcal{F}_{asc})$, by eliminating constraints on the odds ratios containing $p(\boldsymbol 0)$.
\end{corollary}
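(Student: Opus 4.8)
The plan is to argue entirely in terms of the kernel lattices of the three design matrices furnished by Theorem~\ref{allmodels}. For a relational model generated by a matrix $\mathbf{M}$, its odds ratio representation is the family of generalized odds ratios $\prod_{\boldsymbol i} p(\boldsymbol i)^{v(\boldsymbol i)} = 1$ indexed by a basis of the integer kernel of $\mathbf{M}$, a vector $\boldsymbol v$ with $\boldsymbol 1'\boldsymbol v = 0$ giving a homogeneous ratio and one with $\boldsymbol 1'\boldsymbol v \neq 0$ a non-homogeneous one. Hence the whole corollary reduces to relating $\mathrm{Ker}(\mathbf{A}_0)$ to $\mathrm{Ker}(\mathbf{A})$ and to $\mathrm{Ker}(\mathbf{A}_1)$, and the block structure of $\mathbf{A}_0$ and $\mathbf{A}_1$ recorded in Theorem~\ref{allmodels} makes both comparisons direct.

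For the HAS statement, I would write a vector on the CP as $\boldsymbol v = (v_0, \tilde{\boldsymbol v})$, with $v_0$ the coordinate of the zero cell $\boldsymbol 0$ and $\tilde{\boldsymbol v}$ the coordinates on the IP. The block form of $\mathbf{A}_0$ splits $\mathbf{A}_0\boldsymbol v = \boldsymbol 0$ into the two conditions $v_0 = -\boldsymbol 1'\tilde{\boldsymbol v}$ and $\mathbf{A}\tilde{\boldsymbol v}=\boldsymbol 0$; equivalently, $\boldsymbol 1_{\mathcal{I}_0}$ is a row of $\mathbf{A}_0$, so every element of $\mathrm{Ker}(\mathbf{A}_0)$ is homogeneous on the CP and its zero-cell exponent is forced to be $v_0 = -\boldsymbol 1'\tilde{\boldsymbol v}$. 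Deleting the $v_0$ coordinate is thus a lattice isomorphism $\mathrm{Ker}(\mathbf{A}_0)\to\mathrm{Ker}(\mathbf{A})$, integrality being preserved because $v_0$ is an integer whenever $\tilde{\boldsymbol v}$ is. Back in multiplicative terms, each defining odds ratio of $LL(\mathcal{F}_{asc})$ carries a single factor $p(\boldsymbol 0)^{v_0}$, and striking it out produces exactly the odds ratio of $HAS(\mathcal{F}_{asc})$ attached to $\tilde{\boldsymbol v}$; this is the claimed ``taking out $p(\boldsymbol 0)$''.

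For the QLL statement I would intersect instead of project. Because $\mathbf{A}_1$ is obtained from $\mathbf{A}$ by adjoining the row $\boldsymbol 1'$ on the IP, one has $\mathrm{Ker}(\mathbf{A}_1) = \{\tilde{\boldsymbol v}\in\mathrm{Ker}(\mathbf{A}) : \boldsymbol 1'\tilde{\boldsymbol v}=0\}$, the homogeneous part of $\mathrm{Ker}(\mathbf{A})$. Under the isomorphism of the previous paragraph these correspond precisely to the elements of $\mathrm{Ker}(\mathbf{A}_0)$ with $v_0 = 0$, that is, to the defining odds ratios of $LL(\mathcal{F}_{asc})$ in which $p(\boldsymbol 0)$ does not appear. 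Consequently, discarding from the representation of $LL(\mathcal{F}_{asc})$ every constraint whose ratio involves $p(\boldsymbol 0)$ leaves a spanning set of $\mathrm{Ker}(\mathbf{A}_1)$, i.e.\ the odds ratio representation of $QLL(\mathcal{F}_{asc})$.

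I would finish with a dimension check to confirm nothing is lost: as $\mathbf{A}$ has full row rank $|\mathcal{F}_{des}|$ and $I = |\mathcal{F}_{asc}| + |\mathcal{F}_{des}|$, we get $\dim\mathrm{Ker}(\mathbf{A}) = |\mathcal{F}_{asc}|$, whereas adjoining the row $\boldsymbol 1'$, which is independent of the rows of $\mathbf{A}$ by Theorem~\ref{theoremNoOE}, lowers the dimension to $|\mathcal{F}_{asc}|-1$ for $QLL$, in agreement with part (iii). The one delicate point is that an ``odds ratio representation'' is not canonical but only a choice of generating set for a kernel lattice; I therefore expect the main obstacle to be the bookkeeping needed to phrase ``taking out $p(\boldsymbol 0)$'' and ``eliminating the constraints containing $p(\boldsymbol 0)$'' as the projection and intersection operations on the kernel lattices, and to verify that these operations carry an integer basis to an integer basis so that the statements hold for genuine odds ratio representations rather than for real kernel vectors only.
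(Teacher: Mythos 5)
Your argument is correct and is essentially the one the paper intends: the corollary is stated as an immediate consequence of Theorem~\ref{allmodels}, and the block forms of $\mathbf{A}_0$ and $\mathbf{A}_1$ given there reduce it to exactly the kernel computations you carry out (the first row of $\mathbf{A}_0$ forcing $v_0=-\boldsymbol 1\tr\tilde{\boldsymbol v}$, so that deleting the zero-cell coordinate is an isomorphism $\mathrm{Ker}(\mathbf{A}_0)\to\mathrm{Ker}(\mathbf{A})$, and $\mathrm{Ker}(\mathbf{A}_1)$ being its homogeneous sublattice). The one caveat you rightly flag --- that the second statement requires the generating set of $\mathrm{Ker}(\mathbf{A}_0)$ to be adapted to the corank-one sublattice $\{v_0=0\}$, so that the ratios not involving $p(\boldsymbol 0)$ actually span $\mathrm{Ker}(\mathbf{A}_1)$ --- is also implicit in the paper's reading of ``odds ratio representation'' and is consistent with its worked example, where removing the single non-homogeneous binomial from the homogenized pair yields the quasi model.
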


A diagram summarizing Theorem \ref{allmodels} is shown in Figure \ref{DiagramStat}.

\begin{example} \label{drugs}
The study described in \cite*{DrugInjection2013} aimed to estimate the number of current injecting drug users in Scotland in 2006, based on four data sources: social enquiry reports (DS1), hospital records (DS2), drug treatment agencies (DS3), recent HCV diagnoses (DS4).  The records of 5670 individuals were obtained from various combinations of those sources and are summarized in Table 1 in \cite{DrugInjection2013}.

It is clear from the context, that although the subjects who do not inject any drugs of interest do exist, these subjects are not going to be listed in any of the data sources described above. Thus, the sample space is an IP.

The model search within the HAS model class identified the HAS model of no third order interaction, [DS1 DS2 DS3 DS4],  as the best fitting model, with the Pearson chi-square statistic equal to $ 5.81$ on one degree of freedom. 

The quasi models, specified by the setting one of the second order interactions to zero, fit well, with the Pearson statistic around $1$ on one degree of freedom. Note, however, that the fit of a quasi-model means that it accounts for about half of the data and the other half remains unmodeled. \qed
\end{example}

In the next section, the relationship between the three model classes is given a geometrical interpretation.

\section{Geometric view}\label{SectionGeometry}

Usually, a statistical model  is seen as the collection of probability distributions possessing particular properties.  These properties can often be formulated using polynomial equations on the log-scale that are satisfied by all distribution in the model. In many cases,  the polynomial representation is not unique, and thus, there may be other properties of the distributions which lead to the same model. Clearly, a model cannot be fully understood without taking into account all properties which could potentially define it.  These considerations suggest using the approach of algebraic geometry for statistical modeling. Under this approach, the concept of polynomial variety would be used to define the collection of probability distributions in the model, using one set of restrictions to define the model, and the notion of vanishing ideal may be interpreted as the collection of all possible interpretations, that is, all possible specifications, of this model.  A more formal introduction is given next.  As all three model classes of interest are special cases of the relational models, the relational model framework is used.

Let $\mathcal{T}$ be a sample space, for example,  $\mathcal{T}=\mathcal{I}_0$ (CP) or $\mathcal{T}=\mathcal{I}$ (IP). Write $T=|\mathcal{T}|$ for the number of cells in $\mathcal{T}$, and denote by  $int(\Delta_{T-1})$ the interior of the $(T-1)$-dimensional simplex. Let $\mathbf{A}$ be a 0-1 matrix of full row rank and no zero columns. Then, the relational model with a design matrix $\mathbf{A}$ is the set of probability distributions comprising
\begin{equation}\label{ExtMprobnb}
int(\Delta_{T-1}) \cap \mathcal{X}_{\mathbf{A}},
\end{equation}
where $\mathcal{X}_{\mathbf{A}}$ is the polynomial variety associated with  $\mathbf{A}$: 
\begin{equation}\label{variety}
\mathcal{X}_{\mathbf A} = \left\{\boldsymbol p \in \mathbb{R}^{T}_{\geq 0}: \,\, \boldsymbol p^{\boldsymbol d^+} = \boldsymbol p^{\boldsymbol d^-}, \,\,  \forall  \boldsymbol d \in Ker(\mathbf{A}) \right \}.
\end{equation}
Here $\boldsymbol{d}^{\boldsymbol{+}}$ and $\boldsymbol{d}^{\boldsymbol{-}}$ denote, respectively, the positive and negative parts of $\boldsymbol d$,  and $ \boldsymbol{p}^{\boldsymbol{d}^{\boldsymbol{+}}}$ and $ \boldsymbol{p}^{\boldsymbol{d}^{\boldsymbol{-}}}$ are interpreted componentwise.  In fact, in order to be in the model,  for $\boldsymbol p \in int(\Delta_{T-1})$ it is sufficient to require that the condition $\boldsymbol{p}^{\boldsymbol{d}^{\boldsymbol{+}}} = \boldsymbol{p}^{\boldsymbol{d}^{\boldsymbol{-}}}$  holds for a set of $\boldsymbol d$ comprising a basis of $Ker(\mathbf{A})$, and it can also be replaced with the equivalent requirement on the generalized odds ratios: $\boldsymbol{p}^{\boldsymbol{d}^{\boldsymbol{+}}} / \boldsymbol{p}^{\boldsymbol{d}^{\boldsymbol{-}}} = 1$.

Let $\bd 1\tr$ = $(1, \dots,1)$ be the row of $1$'s of length $T$.  If $\boldsymbol 1\tr$ belongs to the space spanned by the rows of $\mathbf A$,  the relational model is said to have the overall effect. In this case, the toric variety $\mathcal{X}_{\mathbf A}$ is generated by homogeneous polynomials, and, correspondingly, the relational model can be specified using only homogeneous generalized odds ratios.  Relational models without the overall effect are specified using homogeneous and at least one non-homogeneous generalized odds ratios \citep{KRD11}, and the corresponding variety is affine. The presence of the overall effect is an intrinsic property of the model which does not pertain to a particular parameterization, and relational models with and without the overall effect have very different characteristics \citep{KRD11}.  Some consequences of adding and removing the overall effect to a relational model were studied in detail in \cite{KRoverEff}. Because LL and QLL models are models with an overall effect, their polynomial varieties are homogeneous.  

Let $\mathcal{F}_{asc}$ be an ascending class of subsets of features $\mathcal{F}$ on the IP $\mathcal{I}$. As shown in Section \ref{SectionNoOE}, the HAS model is a relational model with the design matrix $\mathbf{A}$ whose rows are indicators of the subsets in the descending class $\mathcal{F}_{des}$, and, by Theorem \ref{theoremNoOE}, this model does not have an overall effect, and thus, the variety $\mathcal{X}_{\mathbf A}$ is non-homogeneous. The polynomial ideal, $\mathscr{I}_{\mathbf{A}}$, associated with $\mathbf{A}$ is generated by the polynomials defining the variety $\mathcal{X}_{\mathbf{A}}$ (the latter is the zero set of the former) is also non-homogeneous. A procedure of homogenization of an ideal by introducing a new variable is described, for example, in \cite{Cox}. In each non-homogeneous polynomial generating the ideal, the monomials are multiplied by a (non-negative) power of the new variable, say $p_0$, so that their degrees become equal to the degree of the original polynomial \citep[cf.][p.400, Proposition 8.2.7]{Cox}. The homogeneous polynomials and the corresponding homogeneous odds ratios stay unchanged. For details of this procedure in the case of relational models see \cite{KRoverEff}.  The polynomial ideal generated by the homogenized polynomials and the corresponding polynomial variety are homogeneous. The minimal variety that can be obtained after homogenization is called the projective closure of the affine variety $\mathcal{X}_{\mathbf A}$ \citep[cf.][p.418, Definition 8.4.6]{Cox}. The latter can be obtained from the former by dehomogenization via setting $p_0 = 1$. The homogenization of AS independence was considered in \cite{KRoverEff}. Theorem 4 therein established that the intersection of the projective closure of the model of AS independence  on the sample space $\mathcal{I}$ with $int(\Delta_{I-1})$ is the traditional log-linear model  of independence on the sample space $\mathcal{I}_0$. The idea used in the proof is applied here to prove a more general result, stated next.


\begin{theorem}
Let $\mathcal{F}_{asc}$ be an ascending class in $\mathscr{V}$. The intersection of $int(\Delta_{I})$ and the projective closure of the variety corresponding to the HAS model generated by $\mathcal{F}_{asc}$ on the IP $\mathcal{I}$ is homeomorphic to the $LL(\mathcal{F}_{asc})$ model in the CP $\mathcal{I}_0$.
\end{theorem}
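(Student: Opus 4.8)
The plan is to identify the projective closure of the HAS variety with the toric variety $\mathcal{X}_{\mathbf{A}_0}$ of the matrix $\mathbf{A}_0$ from Theorem~\ref{allmodels}, and then to invoke the fact (noted just after~(\ref{variety})) that on the \emph{open} simplex a relational variety is already cut out by the generalized odds ratios coming from any basis of the kernel of its design matrix. Concretely, I would let $p_0$ be the homogenizing variable, placed in the position of the zero cell $(0,\dots,0)$ that distinguishes the CP $\mathcal{I}_0$ from the IP $\mathcal{I}$, and work with coordinates $\tilde{\boldsymbol p} = (p_0,\boldsymbol p)$ on $\mathbb{R}^{I+1}$. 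The first step is the lattice lift $\boldsymbol d \mapsto \tilde{\boldsymbol d} = (-\boldsymbol 1\tr\boldsymbol d,\, \boldsymbol d)$: using the block form of $\mathbf{A}_0$ one checks that $\mathbf{A}_0 \tilde{\boldsymbol d} = \boldsymbol 0$ holds precisely when $\mathbf{A}\boldsymbol d = \boldsymbol 0$, so this map is a linear isomorphism $Ker(\mathbf{A}) \to Ker(\mathbf{A}_0)$ that carries an integer basis to an integer basis.

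The second step is the homogenization calculation. For $\boldsymbol d \in Ker(\mathbf{A})$ the HAS constraint is $\boldsymbol p^{\boldsymbol d^{+}} = \boldsymbol p^{\boldsymbol d^{-}}$, and the degrees of its two sides differ by exactly $\boldsymbol 1\tr\boldsymbol d$. Homogenizing multiplies the side of lower degree by a power of $p_0$; tracking the sign shows $p_0$ enters the numerator when $\boldsymbol 1\tr\boldsymbol d < 0$ and the denominator when $\boldsymbol 1\tr\boldsymbol d > 0$, always with exponent $|\boldsymbol 1\tr\boldsymbol d|$. This is exactly the $p_0$-exponent $-\boldsymbol 1\tr\boldsymbol d = \tilde d_0$ recorded by the lift, so the homogenized HAS odds ratio is nothing but the CP odds ratio $\tilde{\boldsymbol p}^{\tilde{\boldsymbol d}^{+}} = \tilde{\boldsymbol p}^{\tilde{\boldsymbol d}^{-}}$ attached to $\tilde{\boldsymbol d} \in Ker(\mathbf{A}_0)$. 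The homogeneous constraints, those with $\boldsymbol 1\tr\boldsymbol d = 0$, already have $\tilde d_0 = 0$ and are left untouched, as they must be.

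The third step assembles the conclusion. By Theorem~\ref{allmodels}(ii), $LL(\mathcal{F}_{asc}) = int(\Delta_I) \cap \mathcal{X}_{\mathbf{A}_0}$, and on the open simplex $\mathcal{X}_{\mathbf{A}_0}$ is determined by the odds ratios of a basis of $Ker(\mathbf{A}_0)$. Taking the lifted basis from Step~1 and applying Step~2, these are exactly the homogenized HAS constraints, so the set of positive, normalized $\tilde{\boldsymbol p}$ satisfying them coincides with $LL(\mathcal{F}_{asc})$ as a subset of $int(\Delta_I)$. Since every point of $int(\Delta_I)$ is a valid projective representative with all coordinates strictly positive, this set is precisely the intersection of $int(\Delta_I)$ with the projective closure of $\mathcal{X}_{\mathbf{A}}$, and the normalization map $\tilde{\boldsymbol p} \mapsto \tilde{\boldsymbol p}/(\boldsymbol 1\tr\tilde{\boldsymbol p})$ sending a positive ray of the closure to its simplex point is the required homeomorphism onto $LL(\mathcal{F}_{asc})$. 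The freedom in $p_0$ recovered here accounts for the one extra degree of freedom of $LL(\mathcal{F}_{asc})$ over $HAS(\mathcal{F}_{asc})$ implicit in Theorem~\ref{allmodels}(iii).

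The \emph{main obstacle} is the standard gap between homogenizing a generating set and describing the true projective closure: in general the ideal of the closure is the homogenization of the whole ideal (a saturation with respect to $p_0$), not of a chosen set of generators, so homogenizing the kernel-basis binomials could a priori cut out a variety strictly larger than the closure. The restriction to $int(\Delta_I)$ is exactly what defuses this: on the strictly positive locus a relational variety depends only on a kernel basis, and the extra components of the saturation are supported on coordinate hyperplanes $\{p_j = 0\}$, which meet neither the open simplex nor the statistical model. I would therefore confine all variety comparisons to $\mathbb{R}^{I+1}_{>0}$ from the outset, and, as a consistency check, verify the numerator-versus-denominator placement of $p_0$ against the zero first column of $\mathbf{A}_0$ in small cases such as Example~\ref{k=3t}.
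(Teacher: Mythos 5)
Your proof is correct, and its overall skeleton matches the paper's: both arguments reduce the claim to the identification of the projective closure of $\mathcal{X}_{\mathbf{A}}$ with the toric variety $\mathcal{X}_{\mathbf{A}_0}$ of the bordered matrix, followed by Theorem \ref{allmodels}(ii) to recognize the positive part of $\mathcal{X}_{\mathbf{A}_0}$ as $LL(\mathcal{F}_{asc})$. Where you genuinely diverge is in how that identification is established. The paper outsources it: it invokes the Gr\"obner-basis homogenization theorem of Cox, Little and O'Shea together with Theorem 3(ii) of the earlier overall-effect paper, which asserts outright that the projective closure equals $\mathcal{X}_{\mathbf{A}_0}$. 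You instead prove the (weaker but sufficient) positive-orthant version of this fact from scratch: the lattice lift $\boldsymbol d \mapsto (-\boldsymbol 1\tr\boldsymbol d,\,\boldsymbol d)$ is indeed a kernel isomorphism by the block form of $\mathbf{A}_0$ (a dimension count confirms surjectivity), your bookkeeping of where $p_0$ lands under homogenization is right, and your resolution of the generators-versus-ideal obstacle is sound --- since the homogenization of the full ideal is the saturation of the homogenized generators with respect to $p_0$, the two varieties can only differ inside $\{p_0=0\}$, which the open simplex never meets. (Stating the discrepancy as lying in $\{p_0=0\}$ specifically, rather than in general coordinate hyperplanes, would be the sharper formulation, but the conclusion is unaffected.) What your route buys is self-containedness and an explicit explanation of \emph{why} restricting to $int(\Delta_I)$ licenses homogenizing a kernel basis rather than a Gr\"obner basis --- a point the paper's citation-based proof leaves implicit; what it gives up is the stronger scheme-theoretic statement that the closure is exactly $\mathcal{X}_{\mathbf{A}_0}$ everywhere, which the cited Theorem 3(ii) provides but which the present theorem does not actually require.
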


\begin{proof}
 Let $\mathcal{F}_{des} = \mathscr{V} \mysetminusA \mathcal{F}_{asc} \mysetminusA \{\varnothing\}$ and $\mathcal{F}_{des,0} = \mathcal{F}_{des}\cup \{\varnothing\}$. Let $\mathbf{A}$ and $\mathbf{A}_0$ be the indicator matrices for the cylinder subsets of the marginal distributions in the class $\mathcal{F}_{des}$ (on the IP) and $\mathcal{F}_{des,0}$  (on the CP), respectively. Notice that 
$${\mathbf{A}}_0 = \left(\begin{array}{cc} 1 & \boldsymbol 1\tr \\
\boldsymbol 0 & \mathbf{A} \end{array} \right).$$
The intersection of the polynomial variety $\mathcal{X}_{\mathbf{A}_0}$ with the interior of the $I$-dimensional simplex is the set of distributions on $\mathcal{I}_0$ comprising the log-linear model  generated by $\mathcal{F}_{des,0}$.   

Because the row $\boldsymbol 1\tr$ is not a linear combination of rows of $\mathbf{A}$, the polynomial variety $\mathcal{X}_{\mathbf{A}}$ is non-homogeneous \citep[cf.][]{SturBook}.  In order to obtain the projective closure of $\mathcal{X}_{\mathbf{A}}$ \citep[cf.][p.419, Theorem 8.4.8]{Cox}, include the zero cell, $\boldsymbol i_0$, to the sample space, choose a Gr{\"o}bner basis of the ideal $\mathscr{I}_{\mathbf{A}}$, and homogenize all non-homogeneous polynomials in this basis using the cell probability $p({\boldsymbol i_0})$.  By Proposition 8.4.7 in \cite[][p.418]{Cox}, the projective closure of $\mathcal{X}_{\mathbf{A}}$ is the minimal homogeneous variety in the projective space whose dehomogenization is $\mathcal{X}_{\mathbf{A}}$. By Theorem 3(ii) in \cite{KRoverEff}, this projective closure is the variety $\mathcal{X}_{\mathbf{A}_0}$.

Each distribution in the relational model with the model matrix $\mathbf{A}$ has the multiplicative structure prescribed by $\mathbf{A}$ \citep{KRextended}, and during the homogenization, is mapped in a positive distribution in $\mathcal{X}_{\mathbf{A}_0}$. All strictly positive distributions in $\mathcal{X}_{\mathbf{A}_0}$ have the multiplicative structure prescribed by $\mathbf{A}_0$, and thus, the intersection of $\mathcal{X}_{\mathbf{A}_0}$ with the interior of the $I$-dimensional simplex is the log-linear model  generated by $\mathbf{A}_0$ (equivalently, by the descending class $\mathcal{F}_{des,0}$).   
\end{proof}

The next example illustrates homogenization of the HAS conditional independence.

\vspace{5mm}

\noindent \textbf{Example \ref{k=3t}} (revisited):

The polynomial ideal corresponding to the HAS model of conditional independence $[AC][BC]$ on the IP $\mathcal{I}$ is generated by the binomials $p_{110}-p_{010}p_{100}$ and $p_{001}p_{111}-p_{011}p_{101}$. The first binomial is non-homogeneous and can be homogenized by including a new variable, say $p_0$, in the first term. The homogenized pair:
$$p_0\,p_{110}-p_{010}p_{100}, \qquad \mbox{and} \qquad  p_{001}p_{111}-p_{011}p_{101},
$$
generates a homogeneous ideal in $int(\Delta_7)$ whose variety is the projective closure of the original variety of the HAS model.  Note that after homogenization, the first equation seizes to restrict the cell probabilities on the IP. Further, if the first polynomial is removed, the specification of the quasi model is obtained.
\qed

\vspace{8mm}

A schematic description of the results of this section is given in Figure \ref{DiagramGeo}.



Both the  LL and the QLL models are examples of so-called regular multiliear monomial models \cite[cf.][]{LeonelliRico}. A model closely related to the HAS independence on two features was considered in \cite{LeonelliRico}, Example 1. In their terminology, HAS models belong to the class of monomial discrete parametric models. Although HAS models can be specified by square-free monomials, they are neither regular nor multilinear.

\section{Maximum likelihood estimation and model selection}\label{SectionModelSelection}

The absence of the overall effect in HAS models on IP has a strong influence on the properties of the maximum likelihood estimates. If the overall effect is not present, then under multinomial sampling the observed subsets sums are only preserved by the MLE up to a constant multiplier and under Poisson sampling the observed total is not preserved, which is in sharp contrast with the behaviour of log-linear models on the CP \citep{KRD11}. 

The detailed results about the maximum likelihood estimation under relational models, obtained in \cite{KRipf1} and \cite{KRextended}, also hold for the hierarchical AS models. 

Relational models for probabilities without the overall effect are curved  exponential families, and the maximum likelihood estimates under such models have specific characteristics. In fact, if $\boldsymbol{q}$ is the observed probability distribution, and $\boldsymbol{r} = \boldsymbol 1$ is the vector of $1$'s, obviously satisfying $\mathbf{D} \log \boldsymbol 1 = \boldsymbol 0$, then the distribution $\boldsymbol{p}$ given in Theorem \ref{TwoDistr1} is the MLE. This implies that if unique multiplicative parameters are allowed in certain cells, the MLEs in these cells do not necessarily reproduce the observed probabilities in these cells, thus model misfit will not be concentrated to the other cells and the standard approach to defining QLLs does not work. Relational models without the overall effect are not scale invariant, and the MLEs for the cell frequencies do not necessarily satisfy the multiplicative structure prescribed by the model for the cell probabilities. Another implication is that the condition $\boldsymbol 1'\boldsymbol p = 1$ cannot be achieved through a normalization constant (the overall effect), but has to be treated as a separate model constraint.   The MLEs can be computed, for instance, using the algorithms of \cite{EvansForcina11} and \cite{Forcina2019}. A generalized iterative proportional fitting procedure, G-IPF, that can be used for both models with and models without the overall effect was proposed in \cite{KRipf1} and is implemented in \cite{gIPFpackage}. 

The asymptotic properties of the MLE under relational models were studied in \cite{KRtesting}. It was shown that the maximum likelihood estimates for the cell probabilities are asymptotically normally distributed. It was also proved that both the Pearson and the likelihood ratio statistics have  chi-squared distribution with the number degrees of freedom equal to the number of the degrees of freedom of the model.

A model selection procedure of \cite{EdwardsHavranek}, proposed for the graphical and hierarchical log-linear models, uses the structure of generating cut and does not rely on the presence of the overall effect. Therefore, this procedure can be applied for the model search in the class $HAS$.

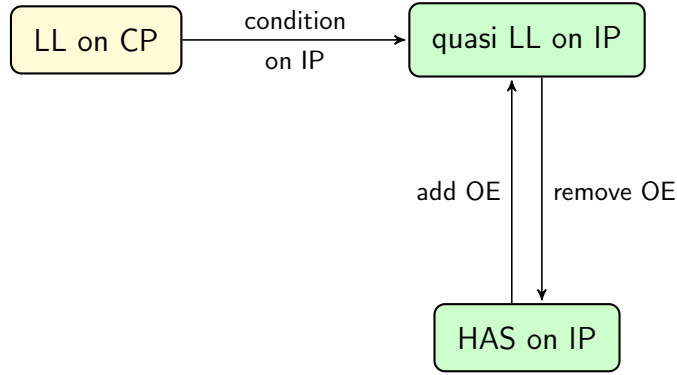
\begin{figure}
\centering
\begin{tikzpicture}[
  font=\sffamily,
  every matrix/.style={ampersand replacement=\&,column sep=3cm,row sep=3cm},
  source/.style={draw,thick,rounded corners,fill=yellow!20,inner sep=.3cm},
  process/.style={draw,thick,circle,fill=blue!20},
  sink/.style={source,fill=green!20},
  datastore/.style={draw,very thick,shape=datastore,inner sep=.3cm},
  dots/.style={gray,scale=2},
  to/.style={->,>=stealth',shorten >=1pt,semithick,font=\sffamily\footnotesize},
  every node/.style={align=center}]

  \matrix{
    \node[source] (hisparcbox) {LL on CP};
    \& \node[sink] (daq) {quasi LL on IP}; \\
    \& \node[sink] (buffer) {HAS on IP}; \\
  };

  \draw[to] (hisparcbox) -- node[midway,above] {condition}
      node[midway,below] { on IP} (daq);
  \draw[to] ([xshift=2mm]daq.south) -- node[midway,right] {remove OE} ([xshift=2mm]buffer.north);
  \draw[to] ([xshift=-2mm]buffer.north) -- node[midway,left] {add OE} ([xshift=-2mm]daq.south);
\end{tikzpicture}
\caption{Interpretation in statistical terms.}
\label{DiagramStat}
\end{figure}

\vspace{10mm}

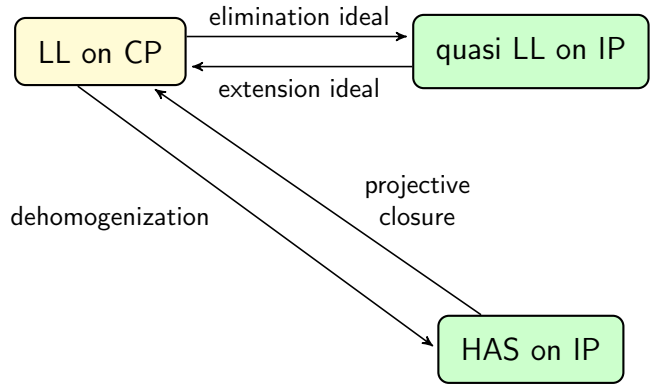
\begin{figure}
\centering
\begin{tikzpicture}[
  font=\sffamily,
  every matrix/.style={ampersand replacement=\&,column sep=3cm,row sep=3cm},
  source/.style={draw,thick,rounded corners,fill=yellow!20,inner sep=.3cm},
  process/.style={draw,thick,circle,fill=blue!20},
  sink/.style={source,fill=green!20},
  datastore/.style={draw,very thick,shape=datastore,inner sep=.3cm},
  dots/.style={gray,scale=2},
  to/.style={->,>=stealth',shorten >=2pt,semithick,font=\sffamily\footnotesize},
  every node/.style={align=center}]

  \matrix{
    \node[source] (hisparcbox) {LL on CP};
    \& \node[sink] (daq) {quasi LL on IP}; \\
    \& \node[sink] (buffer) {HAS on IP}; \\
  };

   \draw[to] ([yshift=2mm]hisparcbox.east) -- node[midway,above] {elimination ideal}([yshift=2mm]daq.west);
   \draw[to] ([yshift=-2mm]daq.west) -- node[midway,below] {extension ideal} ([yshift=-2mm]hisparcbox.east);
   \draw[to] ( buffer) -- node[midway,right=0.5cm] { projective\\closure} ( hisparcbox);
   \draw[to] ([xshift=-3mm]hisparcbox.south) -- node[midway,left=0.5cm] { dehomogenization} (buffer.west);
\end{tikzpicture}

 \caption{Interpretation in terms of Algebraic Geometry.}
 \label{DiagramGeo}
  \end{figure}

\section*{Acknowledgements}

The authors thank Antonio Forcina for his insightful comments.

\bibliographystyle{apacite}
\bibliography{cells}

\end{document}